\documentclass[final]{IEEEtran}
\usepackage{amsthm,amssymb,graphicx,multirow,amsmath,color,amsfonts}
\usepackage[update,prepend]{epstopdf}
\usepackage[noadjust]{cite}
\usepackage{tikz}
\usepackage{setspace}
\usepackage{bbm} 
\usepackage{pdfpages}
\usepackage{multirow}
\usepackage{comment}
\usepackage[justification=centering]{caption}

\usepackage{lipsum, color}
\usepackage{tikz}
\usepackage{tkz-tab}
\usetikzlibrary{backgrounds,shapes.multipart}
\usetikzlibrary{shapes.geometric, arrows}
\usetikzlibrary{automata,arrows,positioning,calc}
\usepackage{subcaption} 
\usepackage{multicol}
\usepackage{booktabs,chemformula,lipsum}
\usepackage{authblk}
\usepackage{dblfloatfix}
\usepackage{pgfplots}
\usetikzlibrary{shapes.misc}
\pgfplotsset{compat=1.16}
\usepackage{breqn}
\newcommand{\colb}{\textcolor{blue}}

\usepackage{hyperref}
\usepackage{algorithm}
\usepackage[noend]{algpseudocode}
\usepackage{algpseudocode}
\usepackage[most]{tcolorbox}

\usepackage{pifont}
\usepackage[utf8]{inputenc}
\usepackage[noadjust]{cite}
\usepackage{amsmath}
\usepackage{times}
\usepackage{textcomp}
\usepackage{blkarray}
\usepackage{amsfonts}
\usepackage{amssymb}
\usepackage{amsthm}
\usepackage{graphicx}
\usepackage{float}
\usepackage{layout}
\usepackage{array}
\usepackage{comment}
\usepackage{flushend}
\usepackage{color}
\usepackage{xcolor}
\usepackage{soul}
\usepackage{footmisc}
\usepackage{arydshln}
\theoremstyle{plain}
\usepackage{epstopdf}
\usepackage{amssymb}
\usepackage{bbm}
\usepackage{bm}
\usepackage{mathtools}
\usepackage{cuted}
\usepackage{lipsum, color}
\usepackage{tikz}
\usepackage{tkz-tab}
\usetikzlibrary{backgrounds,shapes.multipart}
\usetikzlibrary{shapes.geometric, arrows}
\usetikzlibrary{automata,arrows,positioning,calc}
\usepackage{caption}
\usepackage{subcaption} 
\usepackage[inline]{enumitem}
\usepackage{multicol}
\usepackage{booktabs,chemformula,lipsum}
\usepackage{authblk}
\usepackage{pgfplotstable}
\usepackage{scalerel}
\usepackage{lipsum}
\usetikzlibrary{plotmarks}
\tikzstyle{vertex}=[circle,fill=green!25,minimum size=20pt,inner sep=0pt]
\tikzstyle{selected vertex} = [vertex, fill=red!24]
\tikzstyle{edge} = [draw,thick,-]
\tikzstyle{weight} = [font=\small]
\tikzstyle{selected edge} = [draw,line width=5pt,-,red!50]
\tikzstyle{ignored edge} = [draw,line width=5pt,-,black!20]
\tikzstyle{startstop} = [rectangle, rounded corners, minimum width=3cm, minimum height=1cm,text centered, draw=black, fill=red!30]
\tikzstyle{io} = [trapezium, trapezium left angle=70, trapezium right angle=110, minimum width=3cm, minimum height=1cm, text centered, draw=black, fill=blue!30]
\tikzstyle{process} = [rectangle, minimum width=3cm, minimum height=1cm, text centered, draw=black, fill=orange!30]
\tikzstyle{decision} = [diamond, minimum width=3cm, minimum height=1cm, text centered, draw=black, fill=green!30]
\tikzstyle{arrow} = [thick,->,>=stealth]
\usetikzlibrary{decorations}
\usetikzlibrary{positioning,arrows.meta,,bending}

\def\nb0{{\mathbf{0}}}
\def\nb1{{\mathbf{1}}}

\newtheorem{lemma}{Lemma}

\newtheorem{definition}{Definition}

\newtheorem{theorem}{Theorem}

\allowdisplaybreaks 
\begin{document}

\graphicspath{{./Figures/}}
\title{Spatial Distribution of Mean PAoI in LEO Satellite-enabled IoT Networks}
\author{
Badiaa Gabr and Mustafa A. Kishk
\thanks{Badiaa Gabr and Mustafa A. Kishk are with the Department of Electronic Engineering, Maynooth University, W23 F2H6, Ireland. Email: \{badiaa.gabr.2025@mumail.ie,\ mustafa.kishk@mu.ie\}. 
} 
}

\maketitle
\begin{abstract}
Non-terrestrial networks (NTNs), such as Low Earth Orbit (LEO) satellites, can provide consistent coverage for Internet of Things (IoT) nodes, particularly in rural areas where cellular coverage is lacking. These IoT nodes are used to monitor and record their surroundings, supporting various applications, including real-time applications. In such cases, transmission of the most up-to-date data is essential, making the Age of Information (AoI) a critical metric to consider. The AoI measures the elapsed time from the data generation at the source node to its reception at the destination. Moreover, the Peak AoI (PAoI), which denotes the maximum AoI in the system before an update is received, provides valuable insights into the network's performance. Therefore, this encourages us to focus on the PAoI within a large-scale and fully loaded IoT network, which is covered by LEOSATs. Using tools from stochastic geometry, this paper derives the spatial distribution of the uplink mean PAoI in a LEOSAT-enabled IoT network. To achieve that purpose, the meta distribution of the uplink conditional coverage probability is derived. Numerical results articulate how varying the number of deployed LEOSATs and their altitudes affects the coverage probability meta distribution and the spatial distribution of the mean PAoI. Additionally, by examining various scenarios with different numbers of LEOSATs and altitudes, we can derive valuable insights to determine the optimal beamwidth for LEOSATs tailored to each specific scenario to minimize the conditional mean PAoI.


\end{abstract}

\begin{keywords}
Low Earth Orbit (LEO), Peak Age of Information (PAoI), Internet of Things (IoT), Stochastic Geometry, Meta Distribution.
\end{keywords}

\vspace{-2mm}
\section{Introduction}\label{Sec:Introduction}

Alongside traditional performance metrics such as transmission latency, data rate, and packet loss, data freshness emerges as a vital metric for assessing network reliability, particularly in real-time applications. Data freshness refers to how current and relevant the transmitted data is. Accordingly, it is challenging to account for in large-scale networks such as the Internet of Things (IoT) networks, where a massive amount of data is exchanged periodically within the network. To assess data freshness, the concept of Age of Information (AoI) has been introduced, which measures the time from data generation to its reception \cite{AoI1}. In addition to the Age of Information (AoI), the Peak Age of Information (PAoI) is a crucial metric to be considered, representing the maximum AoI in the network before an update is received \cite{AoI8}. Considering two distinct transmission approaches: the first-come-first-served (FCFS) and the last-come-first-served with preemption (LCFS-PR) is powerful to investigate both peak and average AoI within a cellular network \cite{AoI23}. Therefore, it is vital to determine the PAoI of data collected from IoT nodes to assess network reliability. Moreover, the IoT nodes are sometimes distributed in rural regions, in which cellular network infrastructure is commonly limited, and hence cannot support the IoT nodes in this case. To ensure a consistent connection for IoT nodes in rural and remote areas, some of the terrestrial transmissions are being shifted to non-terrestrial networks (NTN), such as Low Earth Orbit (LEO) satellites. LEOSATs provide a larger coverage area than terrestrial networks, enabling more effective transmission of data collected by IoT nodes. 
On the other hand, the rotational motion of LEOSATs increases the randomness of the distances to ground IoT nodes. This randomness increases the computational complexity of network performance metrics, specifically when the number of transmissions is high due to the massive number of connected devices. Therefore, this scenario highlights the need for a tractable model. One effective mathematical framework for achieving this is stochastic geometry \cite{Haenggi2012,Jeffrey}, which enables the evaluation of distinct performance metrics for terrestrial and non-terrestrial networks, such as coverage probability, data rate, and AoI of the transmitted data. Multiple stochastic geometry-based models for modeling LEOSAT networks are proposed, considering distinct random processes like the Poisson point process (PPP) \cite{PPP,Niloofar2,Badiaa_LEO1,Akram2}, binomial point process (BPP) \cite{Anna1,Satellite4,Ruibo3,Satellite5,Satellite7}, and line Cox Process \cite{Choi1,Choi2,Cox1,Cox2}. Furthermore, these random processes play a crucial role in determining the distribution expressions of all the performance metrics mentioned earlier, ensuring the system's tractability. 
Accordingly, in this work, we focus on real-time IoT applications that impact various aspects of daily life, including wearable devices, traffic monitoring, crop health, and weather conditions. Transmitting the most up-to-date data is essential for ensuring the reliability of these systems.  Therefore, the concept of data freshness is introduced to address the question: How old is this data? Sometimes, these IoT sensors are deployed in rural areas that may experience cellular coverage interruptions but are still covered by the LEOSAT network. Since the freshness of transmitted data is critical for these applications, we utilize the PAoI metric, which measures the maximum AoI in the system since the last data reception, representing the worst-case scenario for AoI. This metric indicates the worst-case scenario for AoI observed in the system. Consequently, this work employs tools from stochastic geometry to analyze the spatial distribution of the uplink conditional mean PAoI for a given distinct number of LEOSATs and their altitude.


\subsection{Related Work}
In this section, we discuss the most related works in the literature, which can be categorized into: \begin{enumerate*} [nosep]
    \item Stochastic geometry-based analysis in LEOSAT networks, 
    \item AoI analysis in LEOSAT networks,
    \item AoI analysis based on stochastic geometry in LEOSAT networks, and
    \item SIR/SINR meta-distribution in LEOSAT networks
\end{enumerate*}\\

\textbf{Stochastic geometry-based analysis in LEOSAT networks.} 
Stochastic geometry has been employed to model the spatial distribution of LEOSATs to derive multiple performance metric expressions, such as coverage probability, AoI, data throughput, and power consumption in non-terrestrial networks. Hence, the authors in \cite{Anna4} assumed the locations of the  LEOSATs follow BPP. They considered ground gateways that acted as relays between the LEOSATs and ground users to derive coverage probabilities for both the gateways and the LEOSATs, for different numbers of satellites and varying altitudes. In \cite{Satellite4}, the authors formulated the exact expression of the local delay of downlink retransmission from the LEOSATs to the ground users, considering that the LEOSATs were located following BPP. Authors in \cite{Akram3} optimized the uplink coverage probability by determining the optimal LEOSATs' altitude and their effective beamwidth, where both the LEOSATs and ground nodes followed two independent BPPs. A heterogeneous satellite network: closed access and open access modeled as a Cox process in \cite{Cox1} to determine the downlink coverage probability, where the performance of the open access LEO network outperformed the closed access one. Authors in \cite{Cox2} proposed a Cox-based framework to consider distinct performance metrics: average fraction of time that the LEOSAT covered the typical user, average amount of data transmitted from the typical user to its serving LEOSAT, and the delay distribution in the system. The LEOSAT network was located as a PPP to find the average downlink success probability considering distinct SINR thresholds and LEOSAT transmitted power in \cite{Akram1}. A tractable model was proposed in \cite{Namyoon1} to compare the downlink coverage probabilities of PPP, BPP, and actual Starlink constellations. To validate the accuracy of using stochastic processes to describe the locations of the LEOSATs, the authors in \cite{Akram4} assumed that the locations of the LEOSATs follow PPP. They compared the coverage probabilities of two practical orbital patterns: Walker-Delta and Walker-Star, with the coverage probability determined by the PPP. Their findings revealed that these probabilities are nearly identical.

\begin{table*}[htbp]\caption{Table of Notations}
\centering
\begin{center}
    \begin{tabular}{| l | p{13cm} |}
    \hline
    \textbf{Notation} & \textbf{Description} \\ \hline
    $\Psi_{s}$, $\gamma_{s}$ & PPP modeling the locations of the LEOSATs and their density. \\ \hline
    $\Psi_{n}$, $\gamma_n$ & PPP modeling the locations of ground IoT nodes and their density.  \\ \hline
    $r_s = r_e + a_s$ & Radius of the LEOSATs sphere, $r_e = 6371$ km is the Earth's radius, and $a_s$ is the LEOSATs' altitude.\\ \hline
    $r_{\rm max}$ & The maximum association distance from the ground node to the typical LEOSAT to establish a connection. \\ \hline
    $\phi^S$, $\theta^S$ & The zenith angle and the azimuth angle of the LEOSAT. \\ \hline
    $N_s$ & Number of LEOSATs on the surface of the LEOSAT sphere. \\ \hline
    $\theta_b$, $\theta_c$ & LEOSAT beamwidth angle and contact angle. \\ \hline
    $|{\mathcal A_{\rm Cap}}|$ & The spherical cap area with a radius $r$ and length $l$ that contains the typical LEOSAT above the horizon of nodes of interest.\\ \hline
    $\rho_t^{\rm IoT}$, $\sigma_S^2$ & The transmitted power of the IoT node, and the noise power at the typical LEOSAT. \\ \hline
    $r_g$ &  The Euclidean distance from ground node $g$ to the typical LEOSAT.\\ \hline
    $H_0$, $H_i$ & Shadowed Rician channel fading for the intended source node, and for the interferer node $i$.\\ \hline
    $c$, $f_c$, $\lambda_c$  & The speed of light, frequency spectrum, and wavelength of the transmitted signal.\\ \hline
    ${\rm SINR}$ & The signal to interference plus noise ratio.\\ \hline
    ${\mathcal I_{\rm agg}^{\rm U}}$, ${\mathcal I_{\rm agg}^{\rm D}}$ & The aggregate interference at the typical LEOSAT and the ground data controller. \\ \hline
    $P^{\rm U}_{\rm cov}(\gamma_{\rm th}|\gamma_n)$ & The conditional uplink coverage probability when the SINR at the typical LEOSAT exceeds a threshold $(\gamma_{\rm th})$.\\ \hline
    $P_{v}$  & Server's vacation probability.\\ \hline
    $\lambda_a$ & Arrival rate of the updates at the source node.\\ \hline
    $T^S_k$, $T^W_k$, $T^I_k$  & Service time of $k$th update, waiting time for the $k$th update to be served, and the interarrival time after the last successfully transmitted update till $k$th update is generated at the source node. \\ \hline
    $A_k$  & The Peak AoI (PAoI) of the transmitted $k$th update.\\ \hline
    \end{tabular}
\end{center}
\label{tab:TableOfNotations}
\end{table*}

\textbf{AoI analysis in LEOSAT networks}.
The AoI becomes a potential metric that should be considered within the NTNs, especially for real-time applications. The authors in \cite{AoI_LEO_01} proposed a reinforcement energy-aware algorithm to minimize the AoI for both uplink and downlink transmissions by applying a non-orthogonal multiple access (NOMA) scheme. A slotted Aloha access protocol was introduced in \cite{AoI_23} to derive a closed-form expression for the average AoI in scenarios involving both single and multiple ground node transmissions. In \cite{Satellite6}, the authors proposed a latest Information update algorithm (LIUA) approach that utilized a continuous-time-varying graph to identify the optimal pair of transmission delay and arrival rate. The authors in \cite{Tony_Quek} introduced a resource allocation algorithm that minimized the AoI and average transmission energy consumption, constrained by the user
pairing, transmitting power, and beamwidth. The authors in \cite{semantics} aimed to develop a unified framework that integrates both terrestrial and non-terrestrial networks. Their primary focus was on various metrics for assessing data freshness, including Age of Information (AoI), Version Age of Information (VAoI), Query Age of Information (QAoI), and Query Version Age of Information (QVAoI). These metrics are designed to evaluate the relevance and freshness of transmitted data.

\textbf{AoI analysis based on stochastic geometry in LEOSAT networks}. To guarantee the model tractability, stochastic geometry has been effectively employed to analyze AoI performance in LEOSAT networks by treating node locations as spatial point processes.
Authors in \cite{Badiaa_LEO1} considered the ground IoT nodes and LEOSATs that followed two independent PPPs to derive an expression for the mean PAoI of the received updates at the ground data controller from a distant ground IoT source node relying on the serving LEOSAT, considering various numbers of LEOSATs and their respective altitudes. The authors in \cite{Howard} presented an on-off process to model the service mechanism of the LEOSATs, enabling them to calculate the AoI of the transmitted updates. In \cite{AoI18}, the authors examined a hybrid distribution combining a PPP and a Poisson Cluster Process (PCP) to analyze how the distribution of ground nodes influences the AoI for status update packet transmissions facilitated by UAVs and satellites in a LEOSAT network. Moreover, the authors in \cite{Badiaa_LEO2} assumed the LEOSAT and ground IoT nodes followed two independent homogeneous PPPs (HPPPs) to examine the average uplink PAoI at the serving LEOSAT.  

\textbf{SIR/SINR Meta distribution in LEOSAT networks.} To examine the reliability of the LEOSAT network, the meta distribution of the SIR/SINR at the destination should be derived. SIR/SINR meta distribution determines the percentile of links that exceed a certain SIR/SINR threshold for a given network realization. In \cite{Meta_1}, the authors considered two system models: a simplified non-spherical (planar) model used in analysis and a spherical model used in the simulation. They assumed the ground users follow a 2D HPPP to determine the uplink SIR and SINR meta distributions and throughput distribution under Nakagami fading. A downlink SIR meta-distribution considering Nakagami fading for the transmitted data from LEOSATs to the ground users was derived in \cite{Meta_2}. The authors assumed that LEOSATs followed an HPPP to determine the moments of the downlink coverage probability for the users. A Shadowed Rician fading model was introduced in \cite{Meta_4} to derive the SINR meta distribution for a downlink transmission from the LEOSAT to the ground relay, where the LEOSATs followed a HPPP distribution. In \cite{Meta_3}, the authors focused on the SNR at the destination, neglecting the interference present at the destination to determine the end-to-end SNR meta distribution in a broadcasting direct-access LEOSAT-based IoT network. They assumed the LEOSATs follow a BPP distribution and that the channel power gain among the IoT nodes and the LEOSATs was characterized by shadowed Rician fading. Additionally, all of the previously mentioned works utilized a Beta distribution approximation to formulate the SIR/SINR meta distribution expression due to its accurate correspondence with the exact expression. The primary difference between this work and our previous work \cite{Badiaa_LEO1} lies in our objectives. In \cite{Badiaa_LEO1}, the main focus was on deriving the average PAoI. On the other hand, this work shifts towards a more detailed spatial analysis through the concept of meta-distribution, accounting for the reliability of the network constrained to different time deadlines. Therefore, we derive the b-th moments and the meta-distribution of the uplink coverage probability. As a result, we obtain the spatial distribution of the mean uplink PAoI. Consequently, deriving the PAoI distribution rather than its average offers a comprehensive understanding of the PAoI behaviour with respect to distinct system parameters such as the number of LEOSATs and the LEOSAT altitude. Additionally, it provides the time percentile required to achieve a specific reliability in the network. Furthermore, this framework can be extended to incorporate reconfigurable holographic surfaces (RHS) to investigate the effects of a hybrid beamforming design within multi-altitude LEO satellite networks \cite{El_Hajjar_1,El_Hajjar_2,El_Hajjar_3}.

\subsection{Contributions} \label{SubSec:Contributions}
This work introduces a novel model which, to our knowledge, has not been previously explored in the literature. The main contribution of this work is proposing a stochastic geometry-based spherical model to derive the spatial distribution of the uplink conditional mean PAoI of the transmitted update from the ground IoT node to its serving LEOSAT under shadowed Rician fading. Additionally, we derive the b-th moment of the conditional coverage probability and its meta distribution expression, considering the SINR at the serving LEOSAT to determine the b-th moment of the conditional uplink mean PAoI and its spatial distribution.

\textbf{PAoI in LEOSAT-enabled IoT}.
We consider a large-scale IoT network deployed in a rural area that lacks cellular coverage. To ensure consistent transmission, we assume the LEOSAT network covers this network. In this scenario, both the IoT nodes and the LEOSATs are modeled using two different and independent HPPPs: one representing the ground IoT nodes distributed on the Earth's surface and the other representing the LEOSATs deployed on the surface of a sphere, which is concentric with the Earth. This work considers an uplink transmission scenario, focusing on one of the essential metrics in real-time applications: the PAoI of the transmitted update at the serving LEOSAT. 

\textbf{Performance Analysis.}
Due to the ergodicity of the PPP, the spatial average is the ensemble mean. Therefore, for a given network realization, the b-th moment of the conditional uplink coverage probability expression is formulated across different network realizations. Consequently, we use this expression to formulate the b-th moment of the conditional uplink mean PAoI expression. Additionally, to account for the spatial variations of the coverage probability, which is a random variable from the spatial perspective domain, we find the Beta distribution approximated meta-distribution expression for the uplink conditional coverage probability by calculating the SINR at the serving LEOSAT. Based on this distribution, we find the spatial distribution of the conditional uplink mean PAoI. 

\textbf{System-level insights.} 
By analyzing the derived performance metrics and numerical results, we gain valuable system-level insights. We examine the impact of various parameters: the total number of LEOSATs and the LEOSAT altitude on the system performance metrics. We find that an optimal LEOSAT altitude can minimize the first moment of the conditional mean PAoI at the serving LEOSAT. 
Additionally, we analyze optimal values of the LEOSAT beamwidths to reduce the conditional mean PAoI across various deployment scenarios.

\subsection{Paper Outline}

The rest of the paper is organized as follows. Sec. \ref{Sec:System_Model} presents the proposed system model. The spatial distribution of the conditional mean PAoI is introduced in Sec. \ref{Sec:Conditional_PAoI_Analysis}. Sec. \ref{Sec:Numerical_Results} discusses simulation results and main insights. The conclusion of this work is in Sec. \ref{Sec:Conclusion}. A summary of all notations used in this paper is provided in Table \ref{tab:TableOfNotations}.

\vspace{-2mm}
\section{System Model} \label{Sec:System_Model} 
In this paper, we consider an IoT network deployed in a region lacking cellular infrastructure and hence relying solely on the LEOSAT network for connectivity. We focus on the uplink scenario where the IoT node transmits its updates to its associated LEOSAT (referred to as a typical LEOSAT). More details about the system model are provided in the following subsections.

\begin{figure*}[t]
\centering
\includegraphics[width=0.75\textwidth]{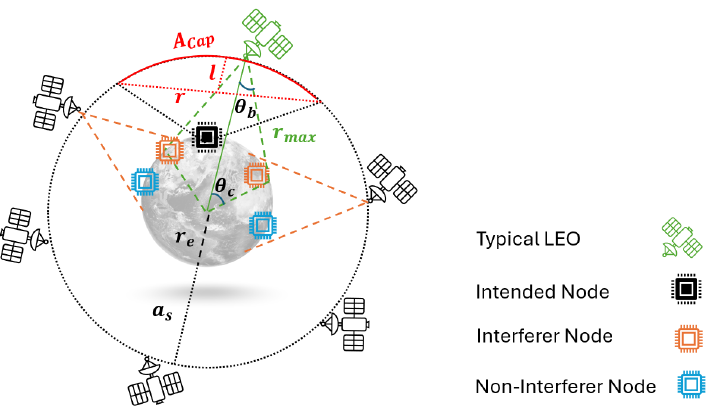}
    \caption{\small{The illustration of the source ground node association to its serving LEOSAT and locations of the rest of the ground nodes.}}
    \label{fig:System_Model}
\end{figure*}

\subsection{Spatial Model}
We consider a large-scale IoT network that is covered by the LEOSAT network, where the ground IoT nodes are distributed following a HPPP $\Psi_n$ in $\mathbb{R}^2$ with intensity $\gamma_n$. The LEOSATs are located on the surface of a sphere $\mathbb{S} \subset \mathbb{R}^3$ with radius $r_s = a_s + r_e$, which is concentric with the Earth, where $a_s$ and $r_e$ are the LEOSAT altitude and the Earth's radius, respectively. We consider the locations of LEOSATs follow another HPPP $\Psi_s$ with intensity $\gamma_s$, where $\Psi_n$ and $\Psi_s$ are independent. Therefore, $\gamma_s$ is determined by $\gamma_s=N_s/(4 \pi r_s^2)$, where $N_s$ refers to the total number of LEOSATs. Moreover, the location of any LEOSAT can be described by spherical coordinates $(r_s, \phi^s, \theta^s)$, where $\phi^s$ is the zenith angle and $\theta^s$ is the azimuth angle of the LEOSATs. Without loss of generality, the typical LEOSAT is located at $(r_s, 0, 0)$. Each LEOSAT has a coverage area, referred to as the footprint, which is determined by its beamwidth angle. Only the IoT nodes within this footprint can be served by the LEOSAT. Moreover, the IoT nodes located within the footprint of the typical LEOSAT and served by another LEOSAT are regarded as potential interferers, represented by the orange IoT nodes shown in Fig. \ref{fig:System_Model}. Let \(\theta_b\) represent the beamwidth angle of a typical LEOSAT. Consequently, the maximum distance at which ground nodes can establish a connection with the LEOSAT is denoted as ${ r_{\rm max}}$, and is calculated by ${r_{\rm max}}(\theta_c)=\sqrt{r_e^2+ r_s^2 -2 r_er_s\cos{(\theta_c)}}$, where $\theta_c$ is the contact angle of the LEOSAT and is defined as

\begin{align} 
\label{Eq:Contact_Angle}
\theta_c=
            \begin{cases}
                {\rm arcsin}\left(\dfrac{r_s}{r_e}\sin(\theta_b)\right), & \theta_b < {\rm arcsin}\left(\dfrac{r_e}{r_s}\right) \\
                {\rm arccos} \left(\dfrac{r_e}{r_s}\right), & \theta_b \ge {\rm arcsin}\left(\dfrac{r_e}{r_s}\right).
            \end{cases}
\end{align}

The IoT nodes are uniformly distributed within the footprint of the typical LEOSAT. In particular, any node with a zenith angle in $[0,\theta_c]$ will lie within this footprint and can establish a connection with the typical LEOSAT. Let $r_o$ be a random variable that refers to the contact distance between the intended node $o$ and the typical LEOSAT. Hence, its PDF is defined as

\begin{align}\label{Eq:Typical_Contact_Distance_PDF}
    f_{R_o}(r_o)=
        \begin{cases}
            \dfrac{r_o}{\theta_c r_e r_s} \dfrac{1}{\sqrt{1-\left(\dfrac{r_e^2+r_s^2-r_o^2}{2r_er_s}\right)^2}}, & \\
            \hspace{4cm} a_s < r_o \leq  {r_{\rm max}\rm}(\theta_c) \\
            0, &\hspace{-3cm} otherwise,
        \end{cases}
\end{align}

On the other hand, any IoT node that has a zenith angle in $[0,\theta_c]$ and a distance to the typical LEOSAT within the range from $a_s$ to $ (r_e + r_s)$ will be counted as an interferer (while its impact depends on its location/distance) with respect to the typical LEOSAT. Let $r_i$ be a random variable that represents the distance between the interferer node $i$ and the typical LEOSAT. Hence, its PDF is defined as

\begin{align}
\label{Eq:Interfer_Contact_Distance_PDF}
f_{R_i}(r_i)=
            \begin{cases}
                \dfrac{r_i}{\pi r_e r_s} \dfrac{1}{\sqrt{1-\left(\dfrac{r_e^2+r_s^2-r_i^2}{2r_er_s}\right)^2}}, & \\
                \hspace{4cm} a_s < r_i < r_e + r_s \\
                0, &\hspace{-3cm} otherwise,
            \end{cases}
\end{align}
Please refer to Appendices A and B in \cite{Anna1} for further details.

\subsection{Channel Model}
According to the orbital motion of LEOSATs, communication between the ground IoT node and the LEOSAT can involve two distinct types of transmissions: line of sight (LoS) and non-line of sight (NLoS) transmissions. LoS transmission occurs when there is a direct connection between the LEOSATs and the ground node. In contrast, NLoS transmission occurs when obstacles intermittently block the signal. To account for both types of transmissions, the Shadowed Rician fading model is considered to represent the channel fading between the LEOSATs and the ground nodes,
where the channel power gain between the ground node $v$ and its serving LEOSAT follows the distribution $|h_v|^2 \sim \mathcal{SR}(\Omega,b_0,m)$ \cite{Alouini_1}. Consequently, the CDF of the $\mathcal{SR}$ power gain $|h_v|^2$ is presented by

\begin{multline}
    \label{Eq:SR_CDF}
    F_{|h_v|^2}(x) =  \left(\dfrac{2b_0m}{2b_0m+\Omega}\right)^m \times \\
    \sum^\infty_{k=0} \dfrac{(m)_k}{k! \Gamma(k+1)} \left(\dfrac{\Omega}{2b_0m+\Omega}\right)^k\gamma(k+1,\frac{1}{2b_0}x),
\end{multline}
where $(m)_k$ represents the Pochhammer symbol \cite{Table_of_integrals}, $\Gamma(.)$ and $\gamma(.,.)$ are Gamma and lower incomplete Gamma functions, respectively. The term $2b_0$ is the average power of the non-line-of-sight (NLOS) multipath components, while $\Omega$ refers to the average power of the line-of-sight (LOS) component, and $m \ge 0$ is the Nakagami parameter. Consequently, $\mathcal{SR}(\Omega,b_0,m)$ denotes the parameters of the Shadowed Rician distribution. 

\subsection{Queuing Model}
 Due to the orbital motion of LEOSATs, the ground source node will not be covered by the same LEOSAT for an extended period. To ensure reliable communication, the source node first checks the availability of a typical LEOSAT before sending any updates. If the source node is within the coverage area of the LEOSAT, it establishes a connection and begins transmitting its update. 
 Moreover, the M/G/1 queuing model with server vacation assumes that arrivals follow a Poisson distribution while service times adhere to a general distribution. The server operates in two distinct states: working and vacation states. In the working state, there is at least one packet in the queue waiting to be served. On the other hand, when the server is empty, it enters a vacation state, taking a break. During this vacation, any new arrivals must wait in the queue until the server returns from its vacation and resumes service. Hence, this behavior motivates us to consider it for the LEOSAT availability due to its orbital period. Accordingly, we consider that each source node has a zero-buffer M/G/1 queue with server vacation, with no waiting space in the queue. Therefore, any newly generated update will only be processed when the server is empty. If the server is currently busy with another update, any new updates will be ignored until the ongoing update is successfully transmitted. Additionally, we account for a server vacation mechanism that is dependent on the availability of the LEOSAT. This means that the server at the source node will be in service when it is covered by the LEOSAT. If the source node is not covered, the server will enter a vacation state, and any incoming updates will be dropped until the source node is once again covered by the LEOSAT. Let $P_v$ denote the vacation probability, which refers to the probability of no LEOSAT covering the intended source node. As the LEOSATs are Poisson distributed, then the $P_{v}$ is defined as
 
 \begin{align}
\label{Eq:vacation_prob}
        P_{v} = \mathbb{P}(\text{No LEOSAT in the spherical cap})= e^{-\gamma_s {|\mathcal A_{\rm Cap}|}},
\end{align}
where ${|\mathcal A_{\rm Cap}|}$ is the area of the LEOSAT's spherical cap, which is defined as

\begin{align}
\label{Eq:A_Cap}
        {|\mathcal A_{\rm Cap}|}  &= \pi (r^2+l^2) \nonumber \\
         &= 2 \pi (r_E +a_s)^2(1-\cos{(\theta_c)}), 
\end{align}
where $r$ and $l$ are the radius and length of the spherical cap ${\mathcal A_{\rm Cap}}$ as shown in Fig. \ref{fig:System_Model}, and $\theta_c$ is the contact angle that is defined in \eqref{Eq:Contact_Angle}.

\section{Spatial Distribution of the Conditional Mean PAoI Analysis}\label{Sec:Conditional_PAoI_Analysis}

\begin{figure}[t]
\centering
\begin{tikzpicture}[y=1cm, x=1cm, thick, font=\footnotesize] 
\usetikzlibrary{arrows,decorations.pathreplacing}
\tikzset{number line/.style={}}
\tikzset{
brace_top/.style={
color=blue,
decoration={brace},
decorate
},
brace_bottom/.style={
color=blue,
decoration={brace, mirror},
decorate
}
}

\begin{scope}[xshift=0.1cm]
\tikzset{cross/.style={cross out, draw=red, minimum size=2*(#1-\pgflinewidth), inner sep=0pt, outer sep=0pt},
cross/.default={3pt}}
\draw[->] (0,0) -- (7.5,0);
\node[anchor=north] at (0,0) {$\colb{t_{k-1}}$};
\draw [blue,-{Stealth}] (-0.1,0)--(-0.1,0.75);
\node[anchor=north] at (2,0) {$\colb{t_{k-1}'}$};
\draw [blue,{Stealth}-] (2,0)--(2,0.75);
\draw [blue,-{Stealth}] (3.3,0)--(3.3,0.75);
\node[anchor=north] at (3.3,0) {$\colb{t_{k}}$};
\draw[decorate,decoration={brace, mirror,amplitude=5pt}] (0,-0.5) -- (2,-0.5)
node[anchor=north,midway,below=4pt] {$ T_{k-1}$};

\draw [blue,{Stealth}-] (6.75,0)--(6.75,0.75);
\node[anchor=north] at (6.75,0) {$\colb{t_{k}'}$};
\draw [blue,-{Stealth},dashed] (4.5,0)--(4.5,0.75);
\draw [green,-{Stealth}] (5.2,0)--(5.2,0.75);
\draw [green,-{Stealth}] (6,0)--(6,0.75);
\draw [blue,-{Stealth}, dashed] (0.8,0)--(0.8,0.75);
\draw[decorate,decoration={brace, mirror,amplitude=5pt}] (2.1,-0.5) -- (3.3,-0.5)
node[anchor=north,midway,below=4pt] { $ T^I_k $};
\draw[decorate,decoration={brace, mirror,amplitude=5pt}] (3.4,-0.5) -- (4.4,-0.5)
node[anchor=north,midway,below=4pt] { $ T_k^{W} $};
\draw[decorate,decoration={brace, mirror,amplitude=5pt}] (4.5,-0.5) -- (6.75,-0.5)
node[anchor=north,midway,below=4pt] { $ T_k^{S} $};
\end{scope}

\usetikzlibrary{arrows,decorations.pathreplacing}
\tikzset{number line/.style={}}
\tikzset{
brace_top/.style={
color=blue,
decoration={brace},
decorate
},
brace_bottom/.style={
color=blue,
decoration={brace, mirror},
decorate
}
}
\begin{scope}
\tikzset{cross/.style={cross out, draw=red, minimum size=2*(#1-\pgflinewidth), inner sep=0pt, outer sep=0pt},
cross/.default={3pt}}
\begin{axis}[ymin=0,
axis x line=bottom,
axis y line=left,
ylabel=$\Delta (t)$,xmax=17,ymax=12,
xtick={0},xticklabels={$t_k$,$t_k$},
ytick={0},yticklabels={},
extra x tick style={tick style={draw=none}}
]
\addplot [jump mark left,mark=none,draw=blue,domain=1:2,samples=2,thick]
{x+1};

\draw [dashed,draw=blue] (2,2) -- (2,3);
\draw [draw=blue,thick] (2,3) -- (3,3);
\draw [dashed,draw=blue] (3,3) -- (3,4);
\draw [draw=blue,thick] (3,4) -- (4,4);
\draw [dashed,draw=blue] (4,4) -- (4,5);
\draw [draw=blue,thick] (4,5) -- (5,5);
\draw [dashed,draw=blue] (5,5) -- (5,6);
\draw [draw=blue,thick] (5,6) -- (6,6);
\draw [dashed,draw=blue] (6,6) -- (6,2);
\draw [draw=black,thick] (6,2) -- (7,2);
\draw [dashed,draw=black] (7,2) -- (7,3);
\draw [draw=black,thick] (7,3) -- (8,3);
\draw [dashed,draw=black] (8,3) -- (8,4);
\draw [draw=black,thick] (8,4) -- (9,4);
\draw [dashed,draw=blue] (9,4) -- (9,5);
\draw [draw=blue,thick] (9,5) -- (10,5);
\draw [dashed,draw=blue] (10,5) -- (10,6);
\draw [draw=blue,thick] (10,6) -- (11,6);
\draw [dashed,draw=blue] (11,6) -- (11,7);
\draw [draw=blue,thick] (11,7) -- (12,7);
\draw [dashed,draw=blue] (12,7) -- (12,8);
\draw [draw=blue,thick] (12,8) -- (13,8);
\draw [dashed,draw=blue] (13,8) -- (13,9);
\draw [draw=blue,thick] (13,9) -- (14,9);
\draw [dashed,draw=blue] (14,9) -- (14,10);
\draw [draw=blue,thick] (14,10) -- (15,10);
\draw [draw=blue,dashed] (15,10) -- (15,11);
\draw [draw=blue,thick] (15,11) -- (16,11);
\draw [draw=blue,dashed] (16,11) -- (16,12);
\draw [draw=blue,thick] (16,12) -- (17,12);
\draw [dashed,draw=blue] (17,12) -- (17,5);
\end{axis}

\draw [decorate,decoration={ brace, mirror,amplitude=5pt}] (7.1,0) -- (7.1,2.3) 
node[anchor=east,midway,right=4pt] { $ T_k $};
\draw [decorate,decoration={ brace, mirror,amplitude=5pt}] (3.3,-1.2) -- (6.9,-1.2) 
node[anchor=north,midway,below=4pt] { $ T_k $};
\node[anchor=north] at (-0.4,1) {$T_{k-2}$};
\node[anchor=north] at (1,0) {$\colb{t^S_{k-1}}$};
\node[anchor=north] at (4.65,0) {$\colb{t^S_{k}}$};
\draw (5.3,0) node[cross,rotate=10] {};
\draw (6.1,0) node[cross,rotate=10] {};
\end{scope}
\end{tikzpicture}	
\caption{\small{AoI evolution for an update transmission. Upward and downward blue arrows show the generation and reception of the transmitted update, while dashed upward blue arrows show the start of the service times of the update. Green upward arrows show when new updates are ignored till serving the in-service one.}}
\label{fig:AoI_non_preemptive_approach}
\end{figure}

The primary objective of this work is to derive the spatial distribution of the uplink mean PAoI at the typical LEOSAT. To achieve this, we begin our analysis by defining the conditional uplink coverage probability. We then derive its b-th moment and the meta distribution. Finally, we formulate the b-th moment of the uplink mean PAoI and its spatial distribution expressions.

\begin{definition}[Conditional Uplink Coverage Probability]
Given $\Psi_{n}$ of the ground IoT nodes, the conditional uplink coverage probability is the probability that the SINR  at the typical LEOSAT exceeds a certain threshold $\gamma_{\rm th}$, which is given by
\end{definition}
\begin{equation}\label{Eq:Conditional_Uplink_Pcov} 
 \begin{aligned}
	 P^{\rm U}_{\rm cov}(\gamma_{\rm th}|\Psi_{n}) &= \mathbb{P}({{\rm{SINR}} \rm} >\gamma_{\rm th}|\Psi_{n})  \\
        &= \mathbb{P}\left(\dfrac{ {\rho_t^{\rm IoT}} \left(\dfrac{\lambda_c}{4 \pi r_o}\right)^2 |H_o|^2}{ {\mathcal I_{\rm agg}^{\rm U}}+\sigma_S^2} >\gamma_{\rm th} |\Psi_{n} \right), 
\end{aligned}
\end{equation}
where $\rho_t^{\rm IoT}$ is the transmission power of the ground IoT node, $r_o$ and $|H_o|^2$ refer to the distance and the channel power gain between the intended source node and the typical LEOSAT, respectively, the CDF of the $\mathcal{SR}$ power gain $|H_o|^2$ is expressed in \eqref{Eq:SR_CDF}, $\lambda_c = c/f_c$ denotes the wavelength of the transmitted signal, $c$ and $f_c$ represent the speed of light in free space $(c=3\times 10^8 \text{m/s})$ and the carrier frequency of the transmitted signal, respectively, $\sigma_S^2$ is the noise power at the typical LEOSAT, and ${\mathcal I_{\rm agg}^{\rm U}}$ denotes the uplink aggregated interference at the typical LEOSAT from all potential interferers within its footprint and is defined as
\begin{equation}\label{Eq:I_agg} 
 \begin{aligned}
	 {\mathcal I_{\rm agg}^{\rm U}} &= \sum_{i\in\{\Psi_n \cap \text{A(footprint) \rm}\}/o} {\rho_t^{\rm IoT}} \left(\dfrac{\lambda_c}{4 \pi r_i}\right)^2 |H_i|^2, \nonumber \\
\end{aligned}
\end{equation}
where $r_i$ and $|H_i|^2$ refer to the distance and the channel power gain between the interferer node $i$ and the typical LEOSAT, respectively.

\subsection{Mean PAoI}
AoI tracks the elapsed time between the generation of an update at the source node and its successful reception at the destination. We define $\Delta(t_k') = t_k' - t_k$ as the AoI for the transmitted update at the time $t_k'$, with $t_k$ denoting the generation time of the $k$th update. The behavior of $\Delta(t_k')$ exhibits a clear staircase pattern: it consistently increases over time, then drops down to the service time of the update upon its arrival at the destination node, as shown in Fig. \ref{fig:AoI_non_preemptive_approach}. We assume the IoT source node follows a non-preemptive transmission approach, where it ignores any new upcoming update till completing the transmission of the in-service one. As illustrated in Fig. \ref{fig:AoI_non_preemptive_approach}, the updates that are generated during the service time of the $k$th update are ignored (illustrated by the green upward arrows) till the reception of the  $k$th update at the typical LEOSAT.

In this work, we consider a time-slotted system, where updates are generated and received at the beginning of each time slot. As such, the minimum achievable AoI within this system is definitely one time slot. PAoI is defined as the maximum AoI in the network before the reception of the update. To determine the PAoI, we assume the updates are generated according to a Poisson distribution with parameter $\lambda_a$. Consequently, the interarrival times between consecutive updates follow an exponential distribution with the same parameter $\lambda_a$. Let $T^I_k$ denote the interarrival time after the last successfully received update till the $k$th update is generated at the source node. $T_k^S$ and $T_k^W$ represent the service and waiting to be served times of the $k$th update, respectively. Both $T_k^S$ and $T_k^W$ follow a geometric distribution with parameter $P^{\rm U}_{\rm cov}(\gamma_{\rm th})$ which is defined in \eqref{Eq:Conditional_Uplink_Pcov} and $(1-P_{v})$ which is defined in \eqref{Eq:vacation_prob}, respectively. Therefore, $T_k = T_k^S + T_k^W$ denotes the total time of the $k$th update from its generation till its successful reception. Let $A_k^U$ denote the uplink PAoI at the typical LEOSAT, which is defined as follows

\begin{equation}\label{Eq:thm_Non_PAoI} 
  \begin{aligned}
      A_k &= T_{\rm k-1} + T^I_k + T_k + T^{\rm U}_P, 
  \end{aligned}
\end{equation}
where $T^{\rm U}_P = \dfrac{1}{c}\int_{a_s}^{ {\rm r_{\rm max}}} r_o f_{R_o}(r_o) dr_o$ is the uplink propagation time, which is a random variable that depends on the distance $R_o$ between the source node and the typical LEOSAT.

\begin{definition}[Conditional Uplink Mean PAoI] \label{thm:Cond_peak_AoI}
Conditioned on the HPPP $\Psi_n$ for the ground nodes, the conditional uplink mean PAoI $\bar{A}_k$
is defined as: 

\begin{align}\label{Eq:Cond_PAoI_Mean}
      &\bar{A}_k = \mathbb{E}[A_k(\gamma_{\rm th}|\Psi_n)] \nonumber \\
      &= \mathbb{E}[(T_{\rm k-1} +T_k + T^I_k + T^{\rm U}_P) |\Psi_n],\nonumber \\
  \end{align}
where the expectation is over the channel fading for a given location of the IoT nodes.
\end{definition}
Due to the PPP being stationary and ergodic, the spatial average of the PAoI $\bar{A}_k$ is the ensemble mean across different network realizations. Moreover, from the spatial domain perspective, the conditional coverage probability is a random variable that varies in each network realization \cite{ElSawy1}. The service time $T_k^S$ of the $k$th update follows a geometric distribution with a parameter $P^{\rm U}_{\rm cov}(\gamma_{\rm th})$, which depends on the network realization. Consequently, to determine the expectation of the conditional mean PAoI in \eqref{Eq:Cond_PAoI_Mean}, we need to derive the b-th moment of the coverage probability first.

\subsection{b-th Moment of Conditional Coverage Probability}
Conditioned on a given realization of $\Psi_n$, the source of randomness becomes only the channel power gain of the interferer nodes $|H_i|^2$, where all potential interferers experience shadowed Rician fading. Therefore, the b-th moment expression of the conditional coverage probability is given by the following theorem. 
\begin{theorem}[b-th Moment of Conditional Coverage Probability]
\label{thm:bth_Moment_of_Conditional_Coverage_Probability}
Conditioned on the ground IoT node locations $\Psi_n$, the b-th moment of the conditional coverage probability at the typical LEOSAT is approximated as: 
\begin{align}
\label{Eq:Uplink_Coverage_Probability_bth_Moments_Final}
&M^b(\gamma_{\rm th}|\psi_n) =
\int_{a_s}^{ r_{\rm max}} \biggl[  \sum_{\substack{b_1+b_2+\cdots +b_k=b \\ b_1,b_2,\cdots,b_k \geq 0}} \binom{b}{b_1,b_2,\cdots, b_k}  \nonumber \\
    &\prod_{j=1}^k \biggl(\binom{k}{j} (-1)^{j+1} \biggr)^{b_j} \exp\biggl(\dfrac{-j\mu_1 \gamma_{\rm th}\sigma_S^2 r{_o^{\rm U}}^2 \sum_{j=1}^k j b_j}{m_h{\rho_t^{\rm IoT}} \nu}\biggr) \times \nonumber \\
    &\prod_{j=1}^k \biggl({\mathcal{L}_{\mathcal I_{\rm agg}^{\rm U}}} \biggl(\dfrac{-j\mu_1 \gamma_{\rm th} r{_o^{\rm U}}^2}{m_h{\rho_t^{\rm IoT}} \nu}\biggr)\biggr)^{b_j}\biggr] f_{R_o}(r_o) dr_o,
\end{align}
where $\mathcal{L}_{\mathcal I_{\rm agg}^{\rm U}}$ is the Laplace transform for $\mathcal I_{\rm agg}^{\rm U}$, which is defined in (12), $\nu$ is $(\lambda_c/4 \pi)^2$, $k=\frac{m(2b_0+\Omega)^2}{4b_0^2m+4b_0m\Omega+\Omega^2}$, and $m_h = \frac{4b_0^2m+4b_0m\Omega+\Omega^2}{m(2b_0+\Omega)^2}$ are the shape and scale parameters of Gamma distribution of $|H_o|^2$, and $\mu_1$ is the upper tight bound parameter of the of Gamma function, which is defined as
\begin{align}
    \label{Eq:New_tight_bounds} 
    \mu_1 =
        \begin{cases} 
            (k!)^{\dfrac{-1}{k}}  & k > 1 \\
            1 & k < 1 .
        \end{cases} 
\end{align}
\end{theorem}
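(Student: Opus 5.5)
We plan the standard meta-distribution route: first write the conditional coverage probability for fixed $\Psi_n$ and $r_o$ as a probability over the fading of the desired link, then raise it to the $b$-th power, and only afterwards average over the interference and over $R_o$.

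\textbf{Step 1: Gamma approximation of the desired link.} Following the moment-matching used in the shadowed Rician literature, we approximate $|H_o|^2$ by a Gamma random variable with shape $k$ and scale $m_h$, the values in the statement. Matching the first two moments of $\mathcal{SR}(\Omega,b_0,m)$ yields exactly these expressions. Conditioned on $r_o$ and on the interference realization, coverage is the event $|H_o|^2> s\,(\mathcal I_{\rm agg}^{\rm U}+\sigma_S^2)$ with $s=\gamma_{\rm th} r_o^2/(\rho_t^{\rm IoT}\nu)$. This is the complementary CDF of a Gamma variable, i.e. $1-\gamma(k, x/m_h)/\Gamma(k)$.

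\textbf{Step 2: Alzer-type bound on the Gamma CDF.} We invoke the bound
\begin{equation*}
\frac{\gamma(k,y)}{\Gamma(k)} \lessapprox \left(1-e^{-\mu_1 y}\right)^k,
\end{equation*}
with $\mu_1$ as in \eqref{Eq:New_tight_bounds}. The binomial expansion then gives
\begin{equation*}
\mathbb{P}(\mathrm{cov}\mid\cdot)\approx \sum_{j=1}^k \binom{k}{j}(-1)^{j+1} e^{-j\mu_1 s(\mathcal I_{\rm agg}^{\rm U}+\sigma_S^2)/m_h}.
\end{equation*}
This expression is what "approximated" in the theorem refers to. It is exact in form when $k$ is an integer; for non-integer $k$ one rounds.

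\textbf{Step 3: raise to the $b$-th power.} We apply the multinomial theorem to the sum of $k$ terms. Each multi-index $(b_1,\dots,b_k)$ with $\sum b_j=b$ contributes the multinomial coefficient times $\prod_j(\binom{k}{j}(-1)^{j+1})^{b_j}$. The exponentials multiply, so the noise part collapses to $\exp(-\mu_1 s\sigma_S^2\sum_j j b_j/m_h)$. The interference part becomes $\exp(-\mu_1 s \mathcal I_{\rm agg}^{\rm U}\sum_j j b_j/m_h)$.

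\textbf{Step 4: average over the interferers' fading and locations, then over $R_o$.} Since the $|H_i|^2$ are i.i.d. and independent of the node positions, averaging the interference factor gives the Laplace transform $\mathcal{L}_{\mathcal I_{\rm agg}^{\rm U}}$ evaluated at the combined argument. We factor this per $j$ into $\prod_j \mathcal{L}(\cdot)^{b_j}$, matching the displayed form. We treat this factorization as an approximation in the same spirit as the paper's expression. Here we use that $\mathcal{L}_{\mathcal I_{\rm agg}^{\rm U}}$ is computed from the PPP via the PGFL with the interferer distance PDF \eqref{Eq:Interfer_Contact_Distance_PDF}. Finally, we integrate over $r_o$ against \eqref{Eq:Typical_Contact_Distance_PDF} on $(a_s,r_{\rm max}]$.

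\textbf{Main obstacle.} The delicate step is justifying the interchange between the $b$-th power and the spatial averaging. Strictly, $M^b$ averages the $b$-th power of the fading-averaged conditional probability. The interference fading therefore enters inside the power, and the resulting per-interferer term is $\mathbb{E}_H[\exp(-c\sum_j jb_j H)]$, which is not a power of the Laplace transform. We would first try to argue that conditioning on $\Psi_n$ leaves the interferers' fading inside, and obtain the PGFL of that product. Failing that, we would adopt the paper-level approximation that the Laplace transforms factor as displayed. We would explicitly flag this, together with the Alzer bound, as the sources of approximation.
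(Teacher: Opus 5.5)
Your plan reproduces the paper's proof in Appendix~\ref{app:bth_Uplink_Pcov} step for step, and the flaw you flag is the same weak point in the paper's proof. The steps match as follows: Gamma approximation of $|H_o|^2$ with the Alzer-type bound is step (b); the binomial and multinomial expansions are (c)--(d); splitting the interference term into $\prod_j \mathcal{L}_{\mathcal I_{\rm agg}^{\rm U}}(\cdot)^{b_j}$ is (e)--(f); and the final integration is against $f_{R_o}$.

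The factorization you flag is precisely the paper's step (e). The paper attributes it only to the independence of $r_o$ and the $r_i$. That does not give $\mathbb{E}\bigl[e^{-\kappa\,\mathcal I_{\rm agg}^{\rm U}\sum_j jb_j}\bigr]=\prod_j\bigl(\mathbb{E}\bigl[e^{-j\kappa\,\mathcal I_{\rm agg}^{\rm U}}\bigr]\bigr)^{b_j}$, where $\kappa=\mu_1 s/m_h$ with your $s$. The paper offers no further argument, so your caveat is correct and you are not missing an idea.

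Your first option would not recover the stated formula. Keeping the same interferers in all $b$ factors places $\prod_j(1+j\mu_1\gamma_{\rm th}r_o^2/r_i^2)^{-kb_j}$ inside the $r_i$-integral of Lemma~\ref{lemma:Uplink_Iagg}. That is not a product of separately averaged Laplace transforms.

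There is a cleaner way to describe what the theorem actually computes. By the multinomial theorem, the integrand equals $\bar P(r_o)^b$, where $\bar P(r_o)=\sum_{j=1}^k\binom{k}{j}(-1)^{j+1}e^{-j\kappa\sigma_S^2}\mathcal{L}_{\mathcal I_{\rm agg}^{\rm U}}(j\kappa)$ is the coverage averaged over interferer locations and fading. So the formula is exactly $\mathbb{E}_{R_o}[\bar P(R_o)^b]$, a moment over the serving distance only. You can state it that way rather than calling the factorization an approximation.

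Two small corrections:
\begin{enumerate}
\item With $\mu_1$ as in \eqref{Eq:New_tight_bounds}, Alzer's inequality reads $(1-e^{-\mu_1 y})^k\le\gamma(k,y)/\Gamma(k)$, for both $k>1$ and $k<1$. So your $\lessapprox$ is reversed, and $1-(1-e^{-\mu_1 y})^k$ is an upper bound on the coverage given $r_o$ and $\mathcal I_{\rm agg}^{\rm U}$.
\item Moment matching gives the shape $k$ as stated. However, the stated $m_h=1/k$ is the matched scale only for $|H_o|^2$ normalized to unit mean; the unnormalized scale is $(2b_0+\Omega)/k$.
\end{enumerate}
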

\begin{IEEEproof}
See Appendix~\ref{app:bth_Uplink_Pcov}.
\end{IEEEproof}

To calculate $\mathcal{L}_{\mathcal I_{\rm agg}^{\rm U}}$, we condition on the random distance of the intended source node $(R_o = r_o)$ to the typical LEOSAT. Hence, $\mathcal{L}_{\mathcal I_{\rm agg}^{\rm U}}$ is given by the following lemma. 
\begin{lemma}\label{lemma:Uplink_Iagg}
 The Laplace transform of the aggregate interference is
\begin{align}
\label{Eq:Iagg_Final}
 {\mathcal{L}_{\mathcal I_{\rm agg}^{\rm U}}(s)} &= \biggl[\int_{a_s}^{\rm {r_{\rm max}}} \left(1+\left(\dfrac{s{\rho_t^{\rm IoT}}\nu m_h}{r_i^2}\right)\right)^{-k} f_{R_i}(r_i) dr_i \nonumber \\ 
        &+ \int_{\rm {r_{\rm max}}}^{(r_s+r_e)} f_{R_i}(r_i) dr_i\biggr]^{N_s-1}.
\end{align}

\begin{IEEEproof} 
See Appendix~\ref{app:Uplink_Iagg}.
\end{IEEEproof}
\end{lemma}

Consequently, the b-th moment of the conditional mean PAoI is given by the following theorem.
\begin{theorem}[b-th Moment of Conditional Mean PAoI]
\label{thm:bth_Moment_of_Conditional_Mean_PAoI}
Based on the conditional mean PAoI in \eqref{Eq:Cond_PAoI_Mean} and the expression of $M^b(\gamma_{\rm th}|\psi_n)$ in \eqref{Eq:Uplink_Coverage_Probability_bth_Moments_Final}, the b-th moment of the conditional mean PAoI is approximated as:
\begin{align}
\label{Eq:bth_Moment_of_Conditional_Mean_PAoI}
&M^b(\bar{A}_k) =
\sum^b_{x=0} \binom{b}{x} \biggl(\dfrac{2}{1-p_{v}}+\dfrac{1}{\lambda_a}\biggr)^{b-x} \times \nonumber\\
&\Bigg[\sum^x_{y=0}\binom{x}{y} \left(\dfrac{2}{M^b(\gamma_{\rm th}|\psi_n)}\right)^x \biggl(\dfrac{1}{c}\int_{a_s}^{ {\rm r_{\rm max}}} r_o f_{R_o}(r_o) dr_o \biggr)^{x-y}\Bigg].
\end{align}
\end{theorem}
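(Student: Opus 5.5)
The plan is to first reduce the conditional mean PAoI $\bar{A}_k$ in \eqref{Eq:Cond_PAoI_Mean} to a sum of terms, then raise that sum to the $b$-th power and average over the network realization.

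\textbf{Step 1: the conditional mean.} By linearity of expectation over the fading, with $\Psi_n$ fixed,
\begin{equation*}
\bar{A}_k=\mathbb{E}[T_{k-1}|\Psi_n]+\mathbb{E}[T_k|\Psi_n]+\mathbb{E}[T^I_k]+T^{\rm U}_P .
\end{equation*}
Each piece is evaluated as follows.
\begin{itemize}
\item The interarrival time is exponential with rate $\lambda_a$, so $\mathbb{E}[T^I_k]=1/\lambda_a$.
\item Each of $T_{k-1}$ and $T_k$ splits as $T^S+T^W$. The waiting time is geometric with parameter $1-P_v$ from \eqref{Eq:vacation_prob}, giving mean $1/(1-P_v)$.
\item The service time is geometric with parameter $P^{\rm U}_{\rm cov}(\gamma_{\rm th}|\Psi_n)$, giving mean $1/P^{\rm U}_{\rm cov}(\gamma_{\rm th}|\Psi_n)$.
\item The propagation time $T^{\rm U}_P$ is the deterministic average $\frac{1}{c}\int_{a_s}^{r_{\rm max}} r_o f_{R_o}(r_o)\,dr_o$.
\end{itemize}
Collecting terms gives
\begin{equation*}
\bar{A}_k=\Big(\frac{2}{1-P_v}+\frac{1}{\lambda_a}\Big)+\frac{2}{P^{\rm U}_{\rm cov}(\gamma_{\rm th}|\Psi_n)}+T^{\rm U}_P .
\end{equation*}
The first bracket and $T^{\rm U}_P$ do not depend on the realization, because $P_v$ depends only on $\Psi_s$ through $\gamma_s$ and $\theta_c$. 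The only spatially random term is $2/P^{\rm U}_{\rm cov}$.

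\textbf{Step 2: binomial expansion.} I write $\bar{A}_k^b=(C+Z)^b$, where $C=\frac{2}{1-P_v}+\frac{1}{\lambda_a}$ and $Z=\frac{2}{P^{\rm U}_{\rm cov}}+T^{\rm U}_P$. The binomial theorem gives the outer sum $\sum_x\binom{b}{x}C^{b-x}Z^x$. Expanding $Z^x$ once more gives the inner sum $\sum_y\binom{x}{y}\big(2/P^{\rm U}_{\rm cov}\big)^y\big(T^{\rm U}_P\big)^{x-y}$. Taking the expectation over $\Psi_n$ and using ergodicity, so that the spatial average equals the ensemble mean, moves the expectation onto the factors $\mathbb{E}[(P^{\rm U}_{\rm cov})^{-y}]$.

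\textbf{Step 3 (main obstacle): the negative moments.} The quantities $\mathbb{E}[(P^{\rm U}_{\rm cov})^{-y}]$ are negative moments of the coverage probability, whereas Theorem~\ref{thm:bth_Moment_of_Conditional_Coverage_Probability} only supplies the positive moments $M^b$. Exact evaluation would require the full meta distribution. Instead I would use a Jensen-type, first-order approximation that replaces $\mathbb{E}[(P^{\rm U}_{\rm cov})^{-y}]$ by an inverse of the corresponding moment built from $M^b(\gamma_{\rm th}|\psi_n)$. This substitution is what makes the statement an approximation, and it produces the factor $\big(2/M^b(\gamma_{\rm th}|\psi_n)\big)$ raised to the power displayed in \eqref{Eq:bth_Moment_of_Conditional_Mean_PAoI}.

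A careful proof would have to justify this substitution. The route I would take is a delta-method argument: concentrate $P^{\rm U}_{\rm cov}$ around its mean $M^1$. As a check, I would recover the case $b=1$, which should reduce to the mean PAoI result in \cite{Badiaa_LEO1}.
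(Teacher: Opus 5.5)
Steps 1--2 match the paper's Appendix~C: $\bar A_k$ is reduced to $C+2/P^{\rm U}_{\rm cov}+T^{\rm U}_P$, followed by two binomial expansions. Your Step~3 is the same move as the paper's step (c), which replaces the spatial average of the negative power of $P^{\rm U}_{\rm cov}$ by an inverse power of $M^b(\gamma_{\rm th}|\psi_n)$ without any justification. The gap is your claim that this substitution produces the power displayed in the statement. Your inner sum $\sum_{y}\binom{x}{y}\bigl(2/P^{\rm U}_{\rm cov}\bigr)^{y}\bigl(T^{\rm U}_P\bigr)^{x-y}$ is the correct expansion. Replacing $\mathbb{E}[(P^{\rm U}_{\rm cov})^{-y}]$ by any $Q^{-y}$ gives $(2/Q)^{y}$, never the $(2/M^b)^{x}$ of the statement. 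That exponent $x$ comes from a slip in the paper's step (b), which writes $(2/P^{\rm U}_{\rm cov})^{x}$ inside $\sum_y$. Read literally, the inner sum is then $(2/M^b)^x(1+T^{\rm U}_P)^x$, so for $b=1$ the statement gives $C+\frac{2}{M^1}(1+T^{\rm U}_P)$ instead of the intended $C+\frac{2}{M^1}+T^{\rm U}_P$. You should state that the exponent must be $y$ rather than claim agreement with the displayed formula. You should also say exactly what replaces $\mathbb{E}[(P^{\rm U}_{\rm cov})^{-y}]$; ``an inverse of the corresponding moment'' is not yet a definite formula.

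Second, the delta-method justification you propose does not lead to $M^b$. Expanding $p\mapsto p^{-y}$ around $M^1$ gives $\mathbb{E}[(P^{\rm U}_{\rm cov})^{-y}]\approx (M^1)^{-y}+\frac{y(y+1)}{2}(M^1)^{-y-2}\bigl(M^2-(M^1)^2\bigr)$. Because the map is convex, the leading term alone is only a Jensen lower bound. Keeping just that term collapses the whole double sum to $(C+2/M^1+T^{\rm U}_P)^b$, the $b$-th power of the approximate mean. The exponent-corrected statement collapses in the same way, to $(C+2/M^b+T^{\rm U}_P)^b$. The two coincide only at $b=1$, which is the only case the paper evaluates numerically; for $b\ge 2$, neither Jensen nor the delta method singles out $M^b$. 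If you want a replacement consistent with the rest of the paper, use the Beta approximation it already adopts for the meta distribution. For $P\sim{\rm Beta}(\alpha,\beta)$ with $\alpha>y$, $\mathbb{E}[P^{-y}]=\prod_{i=1}^{y}\frac{\alpha+\beta-i}{\alpha-i}$, where $\alpha,\beta$ are the meta-distribution parameters.
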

\begin{IEEEproof}
See Appendix~\ref{app:bth_Moment_of_Conditional_Mean_PAoI}.
\end{IEEEproof}
Now we can compute the mean PAoI in the system by substituting for $b=1$ in the result in Theorem \ref{thm:bth_Moment_of_Conditional_Mean_PAoI}.\\

To determine the spatial distribution of the conditional mean PAoI, we first need to derive the distribution of the conditional coverage probability. Consequently, the meta distribution concept is introduced to capture the variation of the coverage probability across the different network realizations, which is defined next.

\begin{definition}[Coverage Probability Meta Distribution]
The coverage probability meta distribution at the typical LEOSAT $\bar{F}_{\rm Cov}(\gamma_{\rm th}, \tau)$ is the complementary cumulative distribution function (CCDF) of the random variable $P^{\rm U}_{\rm cov}(\gamma_{\rm th}|\Psi_{n})$ defined for a reliability threshold $\tau \in [0,1]$.
\end{definition}
As discussed in \cite{Martin1}, the Beta approximation of the SIR meta distribution aligns perfectly with the exact distribution rather than Markov, Chebyshev, and Paley-Zygmund bounds. This strong correlation establishes the Beta approximation as a widely used method for determining the meta distribution of any random variable. Hence, we adopt the Beta distribution approximation to approximate the meta-distribution of the coverage probability at the typical LEOSAT.
The probability distribution function (pdf) of a Beta-distributed random variable $X$ with mean $\mu$ and variance $\sigma^2$ is given by

\begin{align}
\label{Eq:Beta_Distribution_pdf}
&f_{X}(x) = \dfrac{x^{\dfrac{\mu(\zeta+1)-1}{1-\mu}} (1-x)^{\zeta-1}}{B\left(\dfrac{\mu \zeta}{(1-\mu)},\zeta\right)},
\end{align}
where $\zeta = \dfrac{\mu (1-\mu)^2}{\sigma^2} -(1-\mu)$ and $B(\eta,\zeta)=\dfrac{\Gamma(\eta)+\Gamma(\zeta)}{\Gamma(\eta+\zeta)}$ is the beta function for $\eta >0$ and  $\zeta>0$ \cite{Martin1} and $\Gamma(s) = \int_0^\infty e^{-x} x^{s-1} dx$ is the Gamma function. \\

Therefore, the approximated meta-distribution of the coverage probability is $\bar{F}_{\rm Cov}(\gamma_{\rm th}, \tau)=\mathbb{P}(\mathbb{P}({{\rm{SINR}} \rm} >\gamma_{\rm th}|\Psi_{n}) > \tau) = \mathbb{P}(P^{\rm U}_{\rm cov}(\gamma_{\rm th}|\psi_n) > \tau)$, as follows:

\begin{equation}\label{Eq:Meta_Distribution_2}
\begin{aligned}
    &\bar{F}_{\rm Cov}(\gamma_{\rm th}, \tau) = 1 - \mathcal{I}_\tau \left(\alpha,\beta\right), 
  \end{aligned}
\end{equation}
where $\mathcal{I}_\tau (\alpha,\beta) = \int_0^\tau t^{\alpha-1} (1-t)^{\beta-1} dt$ is the regularized incomplete beta function \cite{ElSawy1}, $\tau \in [0,1]$ denotes the reliability threshold, $\alpha$ and $\beta$ are defined as follows,

\begin{equation}\label{Eq:Alpha}
    \alpha = \dfrac{M^1(\gamma_{\rm th}|\psi_n)(M^1(\gamma_{\rm th}|\psi_n)-M^2(\gamma_{\rm th}|\psi_n))}{M^2(\gamma_{\rm th}|\psi_n)-(M^1(\gamma_{\rm th}|\psi_n))^2}  \nonumber
\end{equation} 
\vspace{0.3mm}
\begin{equation}\label{Eq:Beta}
    \beta = \dfrac{(1-M^1(\gamma_{\rm th}|\psi_n))(M^1(\gamma_{\rm th}|\psi_n)-M^2(\gamma_{\rm th}|\psi_n))}{M^2(\gamma_{\rm th}|\psi_n)-(M^1(\gamma_{\rm th}|\psi_n))^2},
\end{equation} 
where $M^1(\gamma_{\rm th}|\psi_n)$ and $M^2(\gamma_{\rm th}|\psi_n)$ are the first and second moments of $P^{\rm U}_{\rm cov}(\gamma_{\rm th}|\psi_n)$, respectively, and are determined from \eqref{Eq:Uplink_Coverage_Probability_bth_Moments_Final}.

Using \eqref{Eq:Meta_Distribution_2}, we can formulate the spatial distribution of the conditional mean PAoI, which is defined as the following definition.
\begin{definition}[Spatial Distribution of Conditional Mean PAoI] \label{def:Downlink_Coverage_Prob}
The spatial distribution of the conditional mean PAoI represents the Cumulative Distribution Function (CDF) of the conditional mean PAoI, constrained by a time deadline $t$. 
\begin{align}
\label{def:AoI_Spatial_Distribution}
F_{\bar{A}_k}(\gamma_{\rm th}, t)  &= \mathbb{P}( \bar{A}_k \leq t).
\end{align}
\end{definition}
Hence, the approximated spatial distribution of the conditional mean PAoI is given by the following theorem.
\begin{figure*}[!t]
  \centering
  \begin{subfigure}[t]{0.5\textwidth}
      \centering
%
%
\definecolor{mycolor1}{rgb}{0.92941,0.69412,0.12549}%
\definecolor{mycolor2}{rgb}{0.46667,0.67451,0.18824}%
\definecolor{mycolor3}{rgb}{0.49412,0.18431,0.55686}%
\begin{tikzpicture}

\begin{axis}[%
width=2.911in,
height=2.051in,
at={(0in,0in)},
scale only axis,
xmin=0,
xmax=1,
xlabel style={font=\color{white!15!black}},
xlabel={$\tau$},
ymin=0,
ymax=1,
ylabel style={font=\color{white!15!black}},
ylabel={$\bar{F}_{\rm Cov}(\gamma_{\rm th}, \tau)$},
axis background/.style={fill=white},
xmajorgrids,
xminorgrids,
ymajorgrids,
yminorgrids,
legend style={ at={(0.02,0.1)}, anchor=south west, legend columns=1,legend cell align=left, align=left, draw=white!15!black}
]
\addplot [color=black, line width=2.0pt]
  table[row sep=crcr]{%
0	1\\
0.04	1\\
0.08	1\\
0.12	1\\
0.16	1\\
0.2	1\\
0.24	1\\
0.28	1\\
0.32	1\\
0.36	1\\
0.4	1\\
0.44	1\\
0.48	1\\
0.52	1\\
0.56	1\\
0.6	0.999999999999999\\
0.64	0.999999999999904\\
0.68	0.999999999990811\\
0.72	0.999999999345766\\
0.76	0.999999964444056\\
0.8	0.999998502968675\\
0.84	0.999950924599289\\
0.88	0.998760728437382\\
0.92	0.976934878527739\\
0.96	0.726839628353771\\
1	0\\
};
\addlegendentry{Analysis}

\addplot [only marks,color=black, line width=2.0pt, draw=none, mark=triangle, mark options={solid, black}]
  table[row sep=crcr]{%
0	1\\
0.04	1\\
0.08	1\\
0.12	1\\
0.16	1\\
0.2	1\\
0.24	1\\
0.28	1\\
0.32	1\\
0.36	1\\
0.4	1\\
0.44	1\\
0.48	1\\
0.52	1\\
0.56	1\\
0.6	1\\
0.64	1\\
0.68	1\\
0.72	1\\
0.76	0.9999\\
0.8	0.9998\\
0.84	0.9986\\
0.88	0.9844\\
0.92	0.923\\
0.96	0.6828\\
1	0\\
};
\addlegendentry{Simulation}

\addplot [color=mycolor1, line width=2.0pt, forget plot]
  table[row sep=crcr]{%
0	1\\
0.04	1\\
0.08	1\\
0.12	1\\
0.16	1\\
0.2	1\\
0.24	1\\
0.28	1\\
0.32	0.999999999999999\\
0.36	0.999999999999951\\
0.4	0.999999999998301\\
0.44	0.999999999958533\\
0.48	0.999999999246526\\
0.52	0.99999998933715\\
0.56	0.999999878365176\\
0.6	0.999998851370652\\
0.64	0.999990833953855\\
0.68	0.999937212095448\\
0.72	0.99962657607844\\
0.76	0.998057896516388\\
0.8	0.991146148758478\\
0.84	0.96475957868707\\
0.88	0.879297780128434\\
0.92	0.656542768576017\\
0.96	0.25462242658565\\
1	0\\
};
\addplot [only marks,color=mycolor1, line width=2.0pt, draw=none, mark=triangle, mark options={solid, mycolor1}, forget plot]
  table[row sep=crcr]{%
0	1\\
0.04	1\\
0.08	1\\
0.12	1\\
0.16	1\\
0.2	1\\
0.24	1\\
0.28	1\\
0.32	1\\
0.36	1\\
0.4	1\\
0.44	1\\
0.48	1\\
0.52	1\\
0.56	1\\
0.6	1\\
0.64	0.9999\\
0.68	0.9995\\
0.72	0.9973\\
0.76	0.9911\\
0.8	0.9726\\
0.84	0.9282\\
0.88	0.8282\\
0.92	0.6531\\
0.96	0.2874\\
1	0\\
};
\addplot [color=mycolor2, line width=2.0pt, forget plot]
  table[row sep=crcr]{%
0	1\\
0.04	1\\
0.08	1\\
0.12	1\\
0.16	0.99999999999992\\
0.2	0.999999999995028\\
0.24	0.999999999857995\\
0.28	0.999999997623337\\
0.32	0.999999973141259\\
0.36	0.999999775472262\\
0.4	0.999998522728973\\
0.44	0.999992006255141\\
0.48	0.999963263379678\\
0.52	0.999853126720947\\
0.56	0.999479771534745\\
0.6	0.998344687384015\\
0.64	0.995218693029307\\
0.68	0.9873670703579\\
0.72	0.969313760756541\\
0.76	0.931304147438089\\
0.8	0.858341721458536\\
0.84	0.732102682398469\\
0.88	0.540069565191258\\
0.92	0.29711367267643\\
0.96	0.0768377007945549\\
1	0\\
};
\addplot [only marks,color=mycolor2, line width=2.0pt, draw=none, mark=triangle, mark options={solid, mycolor2}, forget plot]
  table[row sep=crcr]{%
0	1\\
0.04	1\\
0.08	1\\
0.12	1\\
0.16	1\\
0.2	1\\
0.24	1\\
0.28	1\\
0.32	1\\
0.36	1\\
0.4	1\\
0.44	1\\
0.48	1\\
0.52	0.9997\\
0.56	0.9991\\
0.6	0.9966\\
0.64	0.9905\\
0.68	0.9748\\
0.72	0.9424\\
0.76	0.8921\\
0.8	0.8128\\
0.84	0.7084\\
0.88	0.5624\\
0.92	0.3353\\
0.96	0.0424\\
1	0\\
};
\addplot [color=blue, line width=2.0pt, forget plot]
  table[row sep=crcr]{%
0	1\\
0.04	0.999999999999995\\
0.08	0.9999999999879\\
0.12	0.999999998825982\\
0.16	0.999999970530096\\
0.2	0.999999647910882\\
0.24	0.99999737365201\\
0.28	0.999985865849957\\
0.32	0.999940185374603\\
0.36	0.99978961877351\\
0.4	0.999361453173243\\
0.44	0.998282460742794\\
0.48	0.995825968027213\\
0.52	0.99070196485336\\
0.56	0.980808240672628\\
0.6	0.962994633095226\\
0.64	0.932940420372599\\
0.68	0.885302371383699\\
0.72	0.814345622116408\\
0.76	0.715295612514436\\
0.8	0.586598750992685\\
0.84	0.433065649614473\\
0.88	0.269333506809998\\
0.92	0.121896412163972\\
0.96	0.0252758937560621\\
1	0\\
};
\addplot [only marks,color=blue, line width=2.0pt, draw=none, mark=triangle, mark options={solid, blue}, forget plot]
  table[row sep=crcr]{%
0	1\\
0.04	1\\
0.08	1\\
0.12	1\\
0.16	1\\
0.2	1\\
0.24	1\\
0.28	1\\
0.32	1\\
0.36	0.9999\\
0.4	0.9991\\
0.44	0.998\\
0.48	0.9944\\
0.52	0.9881\\
0.56	0.9702\\
0.6	0.9404\\
0.64	0.8994\\
0.68	0.843\\
0.72	0.7742\\
0.76	0.6954\\
0.8	0.611\\
0.84	0.4915\\
0.88	0.3089\\
0.92	0.0888\\
0.96	0.0018\\
1	0\\
};
\addplot [color=red, line width=2.0pt, forget plot]
  table[row sep=crcr]{%
0	1\\
0.04	0.999999999058343\\
0.08	0.999999844101558\\
0.12	0.999996992665286\\
0.16	0.999975980215\\
0.2	0.999881830326316\\
0.24	0.999572667474167\\
0.28	0.998751929137619\\
0.32	0.99688635088996\\
0.36	0.993121523083818\\
0.4	0.986211697462827\\
0.44	0.974484139596998\\
0.48	0.955860183833165\\
0.52	0.927954742675867\\
0.56	0.888272794289328\\
0.6	0.834514669778449\\
0.64	0.764990979132928\\
0.68	0.679131681879313\\
0.72	0.578050603605692\\
0.76	0.465094277225425\\
0.8	0.346258390047794\\
0.84	0.230288904363961\\
0.88	0.128182092069706\\
0.92	0.0516170799749338\\
0.96	0.00945658985168518\\
1	0\\
};
\addplot [only marks,color=red, line width=2.0pt, draw=none, mark=triangle, mark options={solid, red}, forget plot]
  table[row sep=crcr]{%
0	1\\
0.04	1\\
0.08	1\\
0.12	1\\
0.16	1\\
0.2	1\\
0.24	1\\
0.28	0.9996\\
0.32	0.999\\
0.36	0.9957\\
0.4	0.9846\\
0.44	0.9637\\
0.48	0.938\\
0.52	0.8973\\
0.56	0.8471\\
0.6	0.7896\\
0.64	0.7326\\
0.68	0.6706\\
0.72	0.5995\\
0.76	0.5164\\
0.8	0.4134\\
0.84	0.2644\\
0.88	0.0975\\
0.92	0.0074\\
0.96	0\\
1	0\\
};
\addplot [color=mycolor3, line width=2.0pt, forget plot]
  table[row sep=crcr]{%
0	1\\
0.04	0.999998834538107\\
0.08	0.999962784242995\\
0.12	0.999725377426662\\
0.16	0.998889185062106\\
0.2	0.996771900952308\\
0.24	0.992397767928739\\
0.28	0.984530757649493\\
0.32	0.971744498620952\\
0.36	0.952523011978679\\
0.4	0.925385553056098\\
0.44	0.88902815175865\\
0.48	0.842473731339548\\
0.52	0.785221913656927\\
0.56	0.717388738888361\\
0.6	0.63982547596707\\
0.64	0.554204386999793\\
0.68	0.463057595331436\\
0.72	0.369752860722387\\
0.76	0.278386672586989\\
0.8	0.193569837008053\\
0.84	0.120071971845656\\
0.88	0.0622750168974642\\
0.92	0.0233506307838824\\
0.96	0.0039800865191788\\
1	0\\
};
\addplot [only marks,color=mycolor3, line width=2.0pt, draw=none, mark=triangle, mark options={solid, mycolor3}, forget plot]
  table[row sep=crcr]{%
0	1\\
0.04	1\\
0.08	1\\
0.12	1\\
0.16	1\\
0.2	0.9993\\
0.24	0.9959\\
0.28	0.9842\\
0.32	0.9619\\
0.36	0.931\\
0.4	0.8872\\
0.44	0.8401\\
0.48	0.7901\\
0.52	0.7361\\
0.56	0.6854\\
0.6	0.6324\\
0.64	0.5751\\
0.68	0.5101\\
0.72	0.4285\\
0.76	0.3334\\
0.8	0.2009\\
0.84	0.0734\\
0.88	0.0095\\
0.92	0.0004\\
0.96	0\\
1	0\\
};
\addplot[color=black,line width=0.25mm, ->,>=stealth
]
table[row sep=crcr]{%
0.95  0.95 \\
0.4   0.5\\
};
\node[above] at (axis cs:0.3,0.4) {\small{$\mathbf{ a_s=[500 \colon 200 \colon 1500]}$}};
\end{axis}

\begin{axis}[%
width=2.911in,
height=2.051in,
at={(0in,0in)},
scale only axis,
xmin=0,
xmax=1,
ymin=0,
ymax=1,
axis line style={draw=none},
ticks=none,
axis x line*=bottom,
axis y line*=left,
legend style={legend cell align=left, align=left, draw=white!15!black}
]
\end{axis}
\end{tikzpicture}%
\caption{\small{Number of LEOSATs ($N_s =900$).}}
\label{Fig:P_Cov_Meta_Distribution_Ns_900}
  \end{subfigure}%
  ~ 
    \begin{subfigure}[t]{0.5\textwidth}
    \centering
%
%
\definecolor{mycolor1}{rgb}{0.92941,0.69412,0.12549}%
\definecolor{mycolor2}{rgb}{0.46667,0.67451,0.18824}%
\definecolor{mycolor3}{rgb}{0.49412,0.18431,0.55686}%
\begin{tikzpicture}

\begin{axis}[%
width=2.911in,
height=2.051in,
at={(0in,0in)},
scale only axis,
xmin=0,
xmax=1,
xlabel style={font=\color{white!15!black}},
xlabel={$\tau$},
ymin=0,
ymax=1,
ylabel style={font=\color{white!15!black}},
ylabel={$\bar{F}_{\rm Cov}(\gamma_{\rm th}, \tau)$},
axis background/.style={fill=white},
xmajorgrids,
xminorgrids,
ymajorgrids,
yminorgrids,
legend style={ at={(0.02,0.1)}, anchor=south west, legend columns=1,legend cell align=left, align=left, draw=white!15!black}
]
\addplot [color=black, line width=2.0pt]
  table[row sep=crcr]{%
0	1\\
0.04	1\\
0.08	1\\
0.12	1\\
0.16	1\\
0.2	1\\
0.24	1\\
0.28	1\\
0.32	1\\
0.36	0.999999999999996\\
0.4	0.999999999999789\\
0.44	0.999999999993217\\
0.48	0.999999999841796\\
0.52	0.999999997191243\\
0.56	0.999999960597716\\
0.6	0.999999550466516\\
0.64	0.999995735273162\\
0.68	0.999965785119729\\
0.72	0.999765058654523\\
0.76	0.998609108678363\\
0.8	0.992887434786162\\
0.84	0.968757485145124\\
0.88	0.884190670673706\\
0.92	0.652586925683841\\
0.96	0.235779342510286\\
1	0\\
};
\addlegendentry{Analysis}

\addplot [only marks,color=black, line width=2.0pt, draw=none, mark=triangle, mark options={solid, black}]
  table[row sep=crcr]{%
0	1\\
0.04	1\\
0.08	1\\
0.12	1\\
0.16	1\\
0.2	1\\
0.24	1\\
0.28	1\\
0.32	1\\
0.36	1\\
0.4	1\\
0.44	1\\
0.48	1\\
0.52	1\\
0.56	1\\
0.6	1\\
0.64	1\\
0.68	0.9997\\
0.72	0.9982\\
0.76	0.992\\
0.8	0.9756\\
0.84	0.9334\\
0.88	0.8371\\
0.92	0.6512\\
0.96	0.2651\\
1	0\\
};
\addlegendentry{Simulation}

\addplot [color=mycolor1, line width=2.0pt, forget plot]
  table[row sep=crcr]{%
0	1\\
0.04	1\\
0.08	1\\
0.12	0.999999999999993\\
0.16	0.99999999999894\\
0.2	0.999999999948928\\
0.24	0.999999998815082\\
0.28	0.99999998342572\\
0.32	0.999999840215067\\
0.36	0.999998842934252\\
0.4	0.999993327526761\\
0.44	0.999968063688121\\
0.48	0.999869247753561\\
0.52	0.999531691630965\\
0.56	0.998507617974132\\
0.6	0.995714206741973\\
0.64	0.988803306173198\\
0.68	0.973211493666571\\
0.72	0.941062988963808\\
0.76	0.880584681251365\\
0.8	0.777498871042336\\
0.84	0.620728180888004\\
0.88	0.414599930485844\\
0.92	0.195710890398874\\
0.96	0.0385761321429\\
1	0\\
};
\addplot [only marks,color=mycolor1, line width=2.0pt, draw=none, mark=triangle, mark options={solid, mycolor1}, forget plot]
  table[row sep=crcr]{%
0	1\\
0.04	1\\
0.08	1\\
0.12	1\\
0.16	1\\
0.2	1\\
0.24	1\\
0.28	1\\
0.32	1\\
0.36	1\\
0.4	1\\
0.44	0.9999\\
0.48	0.9998\\
0.52	0.9993\\
0.56	0.9971\\
0.6	0.9918\\
0.64	0.978\\
0.68	0.9503\\
0.72	0.9066\\
0.76	0.8392\\
0.8	0.7517\\
0.84	0.6326\\
0.88	0.4724\\
0.92	0.2152\\
0.96	0.0143\\
1	0\\
};
\addplot [color=mycolor2, line width=2.0pt, forget plot]
  table[row sep=crcr]{%
0	1\\
0.04	0.999999999999416\\
0.08	0.999999999393717\\
0.12	0.999999966107856\\
0.16	0.99999942835788\\
0.2	0.999995009446944\\
0.24	0.999971326056114\\
0.28	0.999876776368812\\
0.32	0.99957259937032\\
0.36	0.998743562554434\\
0.4	0.996764059989683\\
0.44	0.992525872568399\\
0.48	0.984253557900486\\
0.52	0.969358594514847\\
0.56	0.944409674887014\\
0.6	0.905316684797774\\
0.64	0.847829586119327\\
0.68	0.768427321113868\\
0.72	0.66560029105847\\
0.76	0.541396312009827\\
0.8	0.402889668542708\\
0.84	0.262939753422913\\
0.88	0.139245986798926\\
0.92	0.050354648206272\\
0.96	0.00718949346285402\\
1	0\\
};
\addplot [only marks,color=mycolor2, line width=2.0pt, draw=none, mark=triangle, mark options={solid, mycolor2}, forget plot]
  table[row sep=crcr]{%
0	1\\
0.04	1\\
0.08	1\\
0.12	1\\
0.16	1\\
0.2	1\\
0.24	1\\
0.28	1\\
0.32	1\\
0.36	0.9995\\
0.4	0.9977\\
0.44	0.992\\
0.48	0.9791\\
0.52	0.9592\\
0.56	0.9209\\
0.6	0.8707\\
0.64	0.8101\\
0.68	0.7402\\
0.72	0.6636\\
0.76	0.5789\\
0.8	0.4706\\
0.84	0.3177\\
0.88	0.1325\\
0.92	0.0176\\
0.96	0\\
1	0\\
};
\addplot [color=blue, line width=2.0pt, forget plot]
  table[row sep=crcr]{%
0	1\\
0.04	0.999999953636986\\
0.08	0.999996346352235\\
0.12	0.999954782909336\\
0.16	0.999738299501545\\
0.2	0.999003059275532\\
0.24	0.997089681394262\\
0.28	0.992940469328409\\
0.32	0.985068985038367\\
0.36	0.971601351576852\\
0.4	0.950399571173538\\
0.44	0.919266804682452\\
0.48	0.876223576213284\\
0.52	0.819832773086067\\
0.56	0.749540620526981\\
0.6	0.665991071048748\\
0.64	0.571262894633329\\
0.68	0.468972980292701\\
0.72	0.364187029362197\\
0.76	0.263081486264205\\
0.8	0.172310659984426\\
0.84	0.0980547863008923\\
0.88	0.0447665594355888\\
0.92	0.0137143556465015\\
0.96	0.00159455518344664\\
1	0\\
};
\addplot [only marks,color=blue, line width=2.0pt, draw=none, mark=triangle, mark options={solid, blue}, forget plot]
  table[row sep=crcr]{%
0	1\\
0.04	1\\
0.08	1\\
0.12	1\\
0.16	0.9999\\
0.2	0.9999\\
0.24	0.9988\\
0.28	0.9946\\
0.32	0.9854\\
0.36	0.9641\\
0.4	0.9272\\
0.44	0.8855\\
0.48	0.8372\\
0.52	0.7841\\
0.56	0.7276\\
0.6	0.6656\\
0.64	0.5982\\
0.68	0.5248\\
0.72	0.4407\\
0.76	0.3325\\
0.8	0.1878\\
0.84	0.0614\\
0.88	0.008\\
0.92	0.0001\\
0.96	0\\
1	0\\
};
\addplot [color=red, line width=2.0pt, forget plot]
  table[row sep=crcr]{%
0	1\\
0.04	0.999979235651173\\
0.08	0.999611493046268\\
0.12	0.997925133757199\\
0.16	0.99337994948776\\
0.2	0.984096977766281\\
0.24	0.968118129922065\\
0.28	0.943655870573519\\
0.32	0.909311632678197\\
0.36	0.86425005095004\\
0.4	0.808321048166875\\
0.44	0.742125211484657\\
0.48	0.667020507994239\\
0.52	0.585070580423124\\
0.56	0.498936848277844\\
0.6	0.411718563507798\\
0.64	0.326746957149892\\
0.68	0.247341795679538\\
0.72	0.176541211930509\\
0.76	0.116818838619587\\
0.8	0.0698064794507991\\
0.84	0.0360466033446827\\
0.88	0.0148085256370051\\
0.92	0.00401936430664585\\
0.96	0.000398318525359098\\
1	0\\
};
\addplot [only marks,color=red, line width=2.0pt, draw=none, mark=triangle, mark options={solid, red}, forget plot]
  table[row sep=crcr]{%
0	1\\
0.04	1\\
0.08	1\\
0.12	0.9998\\
0.16	0.9976\\
0.2	0.9841\\
0.24	0.9581\\
0.28	0.9161\\
0.32	0.871\\
0.36	0.8227\\
0.4	0.775\\
0.44	0.7237\\
0.48	0.6671\\
0.52	0.6091\\
0.56	0.5511\\
0.6	0.4878\\
0.64	0.4131\\
0.68	0.3255\\
0.72	0.2104\\
0.76	0.0937\\
0.8	0.0236\\
0.84	0.0022\\
0.88	0\\
0.92	0\\
0.96	0\\
1	0\\
};
\addplot [color=mycolor3, line width=2.0pt, forget plot]
  table[row sep=crcr]{%
0	1\\
0.04	0.999241891724432\\
0.08	0.994091648040271\\
0.12	0.981075719524947\\
0.16	0.957946182777485\\
0.2	0.923605930064118\\
0.24	0.87796522358487\\
0.28	0.821776996843764\\
0.32	0.756466091114384\\
0.36	0.683959700829795\\
0.4	0.606523316593833\\
0.44	0.526605087224899\\
0.48	0.446690828412974\\
0.52	0.369171542593674\\
0.56	0.296225139484948\\
0.6	0.229713996028135\\
0.64	0.17110004276767\\
0.68	0.1213792084379\\
0.72	0.0810373129410283\\
0.76	0.0500299135901209\\
0.8	0.0277892662224851\\
0.84	0.0132626351137051\\
0.88	0.00498805273141256\\
0.92	0.00121723724466427\\
0.96	0.000103631190110143\\
1	0\\
};
\addplot [only marks,color=mycolor3, line width=2.0pt, draw=none, mark=triangle, mark options={solid, mycolor3}, forget plot]
  table[row sep=crcr]{%
0	1\\
0.04	1\\
0.08	0.9977\\
0.12	0.9784\\
0.16	0.9327\\
0.2	0.8739\\
0.24	0.8172\\
0.28	0.7628\\
0.32	0.7084\\
0.36	0.6583\\
0.4	0.6066\\
0.44	0.5526\\
0.48	0.4998\\
0.52	0.4437\\
0.56	0.3772\\
0.6	0.2954\\
0.64	0.1975\\
0.68	0.0989\\
0.72	0.029\\
0.76	0.0051\\
0.8	0.0004\\
0.84	0\\
0.88	0\\
0.92	0\\
0.96	0\\
1	0\\
};
\addplot[color=black,line width=0.25mm, ->,>=stealth
]
table[row sep=crcr]{%
0.95  0.95 \\
0.4   0.5\\
};
\node[above] at (axis cs:0.25,0.4) {\small{$\mathbf{ a_s=[500 \colon 200 \colon 1500]}$}};
\end{axis}

\begin{axis}[%
width=2.911in,
height=2.051in,
at={(0in,0in)},
scale only axis,
xmin=0,
xmax=1,
ymin=0,
ymax=1,
axis line style={draw=none},
ticks=none,
axis x line*=bottom,
axis y line*=left,
legend style={legend cell align=left, align=left, draw=white!15!black}
]
\end{axis}
\end{tikzpicture}%
\caption{\small{Number of LEOSATs ($N_s =1300$).}}
\label{Fig:P_Cov_Meta_Distribution_Ns_1300}
  \end{subfigure}
  \caption{\small{Meta distribution of conditional uplink coverage probability $(\bar{F}_{\rm Cov}(\gamma_{\rm th}, \tau))$ for distinct LEOSATs altitude $a_s$.}}
\end{figure*}

\begin{figure*}[!t]
  \centering
  \begin{subfigure}[t]{0.5\textwidth}
      \centering
%
%
\definecolor{mycolor1}{rgb}{0.85098,0.32549,0.09804}%
\definecolor{mycolor2}{rgb}{0.92941,0.69412,0.12549}%
\definecolor{mycolor3}{rgb}{0.46667,0.67451,0.18824}%
\begin{tikzpicture}

\begin{axis}[%
width=2.911in,
height=2.051in,
at={(0in,0in)},
scale only axis,
xmin=8,
xmax=20,
xlabel style={font=\color{white!15!black}},
xlabel={t [time slots]},
ymin=0,
ymax=1,
ylabel style={font=\color{white!15!black}},
ylabel={$F_{\bar{A}_k}(\gamma_{\rm th},t)$},
axis background/.style={fill=white},
xmajorgrids,
xminorgrids,
ymajorgrids,
yminorgrids,
legend style={ at={(0.02,0.1)}, anchor=south west, legend columns=1,legend cell align=left, align=left, draw=white!15!black}
]
\addplot [color=black, line width=2.0pt]
  table[row sep=crcr]{%
8	1\\
9	1\\
10	1\\
11	1\\
12	1\\
13	1\\
14	1\\
15	1\\
16	1\\
17	1\\
18	1\\
19	1\\
20	1\\
21	1\\
22	1\\
23	1\\
24	1\\
25	1\\
26	1\\
27	1\\
28	1\\
29	1\\
30	1\\
31	1\\
32	1\\
33	1\\
34	1\\
35	1\\
};
\addlegendentry{Analysis}

\addplot [only marks,color=black, line width=2.0pt, draw=none, mark=triangle, mark options={solid, black}]
  table[row sep=crcr]{%
8	1\\
9	1\\
10	1\\
11	1\\
12	1\\
13	1\\
14	1\\
15	1\\
16	1\\
17	1\\
18	1\\
19	1\\
20	1\\
21	1\\
22	1\\
23	1\\
24	1\\
25	1\\
26	1\\
27	1\\
28	1\\
29	1\\
30	1\\
31	1\\
32	1\\
33	1\\
34	1\\
35	1\\
};
\addlegendentry{Simulation}

\addplot [color=mycolor1, line width=2.0pt, forget plot]
  table[row sep=crcr]{%
8	0.999999993283809\\
9	0.99999999999666\\
10	0.999999999999994\\
11	1\\
12	1\\
13	1\\
14	1\\
15	1\\
16	1\\
17	1\\
18	1\\
19	1\\
20	1\\
21	1\\
22	1\\
23	1\\
24	1\\
25	1\\
26	1\\
27	1\\
28	1\\
29	1\\
30	1\\
31	1\\
32	1\\
33	1\\
34	1\\
35	1\\
};
\addplot [only marks,color=mycolor1, line width=2.0pt, draw=none, mark=triangle, mark options={solid, mycolor1}, forget plot]
  table[row sep=crcr]{%
8	1\\
9	1\\
10	1\\
11	1\\
12	1\\
13	1\\
14	1\\
15	1\\
16	1\\
17	1\\
18	1\\
19	1\\
20	1\\
21	1\\
22	1\\
23	1\\
24	1\\
25	1\\
26	1\\
27	1\\
28	1\\
29	1\\
30	1\\
31	1\\
32	1\\
33	1\\
34	1\\
35	1\\
};
\addplot [color=mycolor2, line width=2.0pt, forget plot]
  table[row sep=crcr]{%
8	0.999920816316702\\
9	0.99999844936487\\
10	0.999999941483151\\
11	0.999999996442697\\
12	0.999999999690548\\
13	0.999999999964449\\
14	0.999999999994902\\
15	0.999999999999124\\
16	0.999999999999825\\
17	0.99999999999996\\
18	0.99999999999999\\
19	0.999999999999997\\
20	0.999999999999999\\
21	1\\
22	1\\
23	1\\
24	1\\
25	1\\
26	1\\
27	1\\
28	1\\
29	1\\
30	1\\
31	1\\
32	1\\
33	1\\
34	1\\
35	1\\
};
\addplot [only marks,color=mycolor2, line width=2.0pt, draw=none, mark=triangle, mark options={solid, mycolor2}, forget plot]
  table[row sep=crcr]{%
8	0.9998\\
9	1\\
10	1\\
11	1\\
12	1\\
13	1\\
14	1\\
15	1\\
16	1\\
17	1\\
18	1\\
19	1\\
20	1\\
21	1\\
22	1\\
23	1\\
24	1\\
25	1\\
26	1\\
27	1\\
28	1\\
29	1\\
30	1\\
31	1\\
32	1\\
33	1\\
34	1\\
35	1\\
};
\addplot [color=blue, line width=2.0pt, forget plot]
  table[row sep=crcr]{%
8	0.99362873845879\\
9	0.999354839973584\\
10	0.999906527329649\\
11	0.999982265934146\\
12	0.999995864475769\\
13	0.999998866171867\\
14	0.999999645986825\\
15	0.99999987704143\\
16	0.999999953326141\\
17	0.99999998090031\\
18	0.999999991664294\\
19	0.99999999615338\\
20	0.999999998136217\\
21	0.999999999057274\\
22	0.999999999504592\\
23	0.999999999730615\\
24	0.99999999984895\\
25	0.999999999912917\\
26	0.999999999948513\\
27	0.999999999968849\\
28	0.999999999980752\\
29	0.999999999987873\\
30	0.999999999992222\\
31	0.999999999994928\\
32	0.999999999996641\\
33	0.999999999997744\\
34	0.999999999998464\\
35	0.999999999998941\\
};
\addplot [only marks,color=blue, line width=2.0pt, draw=none, mark=triangle, mark options={solid, blue}, forget plot]
  table[row sep=crcr]{%
8	0.991\\
9	0.9991\\
10	1\\
11	1\\
12	1\\
13	1\\
14	1\\
15	1\\
16	1\\
17	1\\
18	1\\
19	1\\
20	1\\
21	1\\
22	1\\
23	1\\
24	1\\
25	1\\
26	1\\
27	1\\
28	1\\
29	1\\
30	1\\
31	1\\
32	1\\
33	1\\
34	1\\
35	1\\
};
\addplot [color=red, line width=2.0pt, forget plot]
  table[row sep=crcr]{%
8	0.942761728100728\\
9	0.986115920483599\\
10	0.995870825980991\\
11	0.998557965269178\\
12	0.999428930447445\\
13	0.999750068319517\\
14	0.99988138520221\\
15	0.999939816580498\\
16	0.999967705622268\\
17	0.99998182713398\\
18	0.999989347239357\\
19	0.999993529975584\\
20	0.999995946339703\\
21	0.99999738957896\\
22	0.99999827742159\\
23	0.999998838172009\\
24	0.999999200803779\\
25	0.999999440368408\\
26	0.999999601721659\\
27	0.999999712328854\\
28	0.999999789381495\\
29	0.999999843859084\\
30	0.999999882904452\\
31	0.999999911244114\\
32	0.99999993205519\\
33	0.999999947504612\\
34	0.999999959090447\\
35	0.999999967861481\\
};
\addplot [only marks,color=red, line width=2.0pt, draw=none, mark=triangle, mark options={solid, red}, forget plot]
  table[row sep=crcr]{%
8	0.9182\\
9	0.9843\\
10	0.9982\\
11	0.9996\\
12	1\\
13	1\\
14	1\\
15	1\\
16	1\\
17	1\\
18	1\\
19	1\\
20	1\\
21	1\\
22	1\\
23	1\\
24	1\\
25	1\\
26	1\\
27	1\\
28	1\\
29	1\\
30	1\\
31	1\\
32	1\\
33	1\\
34	1\\
35	1\\
};
\addplot [color=mycolor3, line width=2.0pt, forget plot]
  table[row sep=crcr]{%
8	0.814083581274403\\
9	0.924994147693276\\
10	0.965972675094728\\
11	0.982971974302333\\
12	0.990778655904818\\
13	0.994677082750368\\
14	0.996762424629869\\
15	0.997943300434699\\
16	0.998644712761348\\
17	0.999078567460596\\
18	0.999356423210777\\
19	0.999539817792981\\
20	0.999664098632954\\
21	0.999750300696194\\
22	0.999811338205571\\
23	0.999855362218916\\
24	0.999887646202079\\
25	0.999911678546349\\
26	0.999929813586856\\
27	0.9999436694814\\
28	0.999954377026403\\
29	0.999962738543846\\
30	0.999969331301742\\
31	0.999974576024073\\
32	0.999978783015084\\
33	0.99998218368051\\
34	0.999984952369972\\
35	0.999987221708918\\
};
\addplot [only marks,color=mycolor3, line width=2.0pt, draw=none, mark=triangle, mark options={solid, mycolor3}, forget plot]
  table[row sep=crcr]{%
8	0.7626\\
9	0.8867\\
10	0.9508\\
11	0.981\\
12	0.9939\\
13	0.9979\\
14	0.9993\\
15	0.9998\\
16	1\\
17	1\\
18	1\\
19	1\\
20	1\\
21	1\\
22	1\\
23	1\\
24	1\\
25	1\\
26	1\\
27	1\\
28	1\\
29	1\\
30	1\\
31	1\\
32	1\\
33	1\\
34	1\\
35	1\\
};
\addplot[color=black,line width=0.25mm, ->,>=stealth
]
table[row sep=crcr]{%
8  0.99 \\
14   0.6\\
};
\node[above] at (axis cs:14,0.4) {\small{$\mathbf{ a_s=[500 \colon 200 \colon 1500]}$}};
\end{axis}

\begin{axis}[%
width=2.911in,
height=2.051in,
at={(0in,0in)},
scale only axis,
xmin=0,
xmax=1,
ymin=0,
ymax=1,
axis line style={draw=none},
ticks=none,
axis x line*=bottom,
axis y line*=left,
legend style={legend cell align=left, align=left, draw=white!15!black}
]
\end{axis}
\end{tikzpicture}%
\caption{\small{Number of LEOSATs ($N_s =900$).}}
\label{Fig:PAoI_Distribution_Ns_900}
  \end{subfigure}%
  ~ 
    \begin{subfigure}[t]{0.5\textwidth}
    \centering
%
%
\definecolor{mycolor1}{rgb}{0.49412,0.18431,0.55686}%
\definecolor{mycolor2}{rgb}{0.92941,0.69412,0.12549}%
\definecolor{mycolor3}{rgb}{0.46667,0.67451,0.18824}%
\begin{tikzpicture}

\begin{axis}[%
width=2.911in,
height=2.051in,
at={(0in,0in)},
scale only axis,
xmin=8,
xmax=20,
xlabel style={font=\color{white!15!black}},
xlabel={t [time slots]},
ymin=0,
ymax=1,
ylabel style={font=\color{white!15!black}},
ylabel={$F_{\bar{A}_k}(\gamma_{\rm th},t)$},
axis background/.style={fill=white},
xmajorgrids,
xminorgrids,
ymajorgrids,
yminorgrids,
legend style={ at={(0.02,0.1)}, anchor=south west, legend columns=1,legend cell align=left, align=left, draw=white!15!black}
]
\addplot [color=black, line width=2.0pt]
  table[row sep=crcr]{%
8	0.999999992592411\\
9	0.999999999998558\\
10	0.999999999999999\\
11	1\\
12	1\\
13	1\\
14	1\\
15	1\\
16	1\\
17	1\\
18	1\\
19	1\\
20	1\\
21	1\\
22	1\\
23	1\\
24	1\\
25	1\\
26	1\\
27	1\\
28	1\\
29	1\\
30	1\\
31	1\\
32	1\\
33	1\\
34	1\\
35	1\\
};
\addlegendentry{Analysis}

\addplot [only marks,color=black, line width=2.0pt, draw=none, mark=triangle, mark options={solid, black}]
  table[row sep=crcr]{%
8	1\\
9	1\\
10	1\\
11	1\\
12	1\\
13	1\\
14	1\\
15	1\\
16	1\\
17	1\\
18	1\\
19	1\\
20	1\\
21	1\\
22	1\\
23	1\\
24	1\\
25	1\\
26	1\\
27	1\\
28	1\\
29	1\\
30	1\\
31	1\\
32	1\\
33	1\\
34	1\\
35	1\\
};
\addlegendentry{Simulation}

\addplot [color=mycolor1, line width=2.0pt, forget plot]
  table[row sep=crcr]{%
8	0.999720505912612\\
9	0.999992712961723\\
10	0.999999657253859\\
11	0.9999999750539\\
12	0.999999997472058\\
13	0.999999999668379\\
14	0.99999999994652\\
15	0.99999999998979\\
16	0.999999999997757\\
17	0.999999999999445\\
18	0.999999999999848\\
19	0.999999999999955\\
20	0.999999999999985\\
21	0.999999999999995\\
22	0.999999999999998\\
23	0.999999999999999\\
24	1\\
25	1\\
26	1\\
27	1\\
28	1\\
29	1\\
30	1\\
31	1\\
32	1\\
33	1\\
34	1\\
35	1\\
};
\addplot [only marks,color=mycolor1, line width=2.0pt, draw=none, mark=triangle, mark options={solid, mycolor1}, forget plot]
  table[row sep=crcr]{%
8	0.9998\\
9	1\\
10	1\\
11	1\\
12	1\\
13	1\\
14	1\\
15	1\\
16	1\\
17	1\\
18	1\\
19	1\\
20	1\\
21	1\\
22	1\\
23	1\\
24	1\\
25	1\\
26	1\\
27	1\\
28	1\\
29	1\\
30	1\\
31	1\\
32	1\\
33	1\\
34	1\\
35	1\\
};
\addplot [color=mycolor2, line width=2.0pt, forget plot]
  table[row sep=crcr]{%
8	0.977619863396152\\
9	0.996737936916132\\
10	0.99937323212503\\
11	0.999850172291147\\
12	0.999957500716088\\
13	0.999986187624609\\
14	0.999994987607659\\
15	0.999998007867957\\
16	0.999999145569557\\
17	0.999999609042163\\
18	0.999999810885931\\
19	0.999999903998789\\
20	0.999999949161553\\
21	0.999999972053619\\
22	0.999999984118746\\
23	0.99999999070267\\
24	0.999999994409391\\
25	0.999999996555842\\
26	0.99999999783092\\
27	0.999999998606182\\
28	0.999999999087679\\
29	0.999999999392622\\
30	0.999999999589253\\
31	0.999999999718168\\
32	0.999999999804\\
33	0.999999999861971\\
34	0.999999999901652\\
35	0.999999999929154\\
};
\addplot [only marks,color=mycolor2, line width=2.0pt, draw=none, mark=triangle, mark options={solid, mycolor2}, forget plot]
  table[row sep=crcr]{%
8	0.9703\\
9	0.9977\\
10	1\\
11	1\\
12	1\\
13	1\\
14	1\\
15	1\\
16	1\\
17	1\\
18	1\\
19	1\\
20	1\\
21	1\\
22	1\\
23	1\\
24	1\\
25	1\\
26	1\\
27	1\\
28	1\\
29	1\\
30	1\\
31	1\\
32	1\\
33	1\\
34	1\\
35	1\\
};
\addplot [color=blue, line width=2.0pt, forget plot]
  table[row sep=crcr]{%
8	0.848972680784577\\
9	0.950170112623716\\
10	0.981227410802041\\
11	0.992055326503372\\
12	0.996300931661357\\
13	0.998137977538247\\
14	0.999000432106212\\
15	0.999433809927929\\
16	0.999664395939274\\
17	0.999793199948333\\
18	0.999868222686847\\
19	0.999913536139176\\
20	0.999941789587968\\
21	0.99995990732773\\
22	0.999971818908636\\
23	0.999979826911382\\
24	0.999985319733238\\
25	0.999989156354312\\
26	0.999991880708893\\
27	0.99999384457991\\
28	0.999995279899561\\
29	0.999996342300216\\
30	0.999997137920005\\
31	0.999997740231925\\
32	0.999998200804245\\
33	0.999998556297777\\
34	0.999998833087856\\
35	0.999999050362101\\
};
\addplot [only marks,color=blue, line width=2.0pt, draw=none, mark=triangle, mark options={solid, blue}, forget plot]
  table[row sep=crcr]{%
8	0.8112\\
9	0.9268\\
10	0.9779\\
11	0.9943\\
12	0.9982\\
13	0.9994\\
14	0.9999\\
15	0.9999\\
16	0.9999\\
17	0.9999\\
18	1\\
19	1\\
20	1\\
21	1\\
22	1\\
23	1\\
24	1\\
25	1\\
26	1\\
27	1\\
28	1\\
29	1\\
30	1\\
31	1\\
32	1\\
33	1\\
34	1\\
35	1\\
};
\addplot [color=red, line width=2.0pt, forget plot]
  table[row sep=crcr]{%
8	0.625388683502721\\
9	0.807672290869055\\
10	0.895186225207176\\
11	0.939215235404373\\
12	0.962769128813557\\
13	0.976114253858151\\
14	0.984064389023567\\
15	0.989009412977037\\
16	0.992201492358629\\
17	0.994329125361536\\
18	0.99578734542818\\
19	0.996811462634271\\
20	0.997546356389739\\
21	0.99808388019864\\
22	0.998483808224613\\
23	0.998785957595657\\
24	0.999017414319827\\
25	0.999196956863543\\
26	0.999337830302173\\
27	0.999449524620549\\
28	0.999538937666898\\
29	0.999611149735351\\
30	0.999669948220272\\
31	0.999718188532323\\
32	0.999758045957314\\
33	0.999791193759461\\
34	0.999818930691062\\
35	0.999842273335773\\
};
\addplot [only marks,color=red, line width=2.0pt, draw=none, mark=triangle, mark options={solid, red}, forget plot]
  table[row sep=crcr]{%
8	0.637\\
9	0.7746\\
10	0.8538\\
11	0.9093\\
12	0.9484\\
13	0.9717\\
14	0.9841\\
15	0.9922\\
16	0.996\\
17	0.9982\\
18	0.999\\
19	0.9995\\
20	0.9997\\
21	0.9998\\
22	1\\
23	1\\
24	1\\
25	1\\
26	1\\
27	1\\
28	1\\
29	1\\
30	1\\
31	1\\
32	1\\
33	1\\
34	1\\
35	1\\
};
\addplot [color=mycolor3, line width=2.0pt, forget plot]
  table[row sep=crcr]{%
8	0.405988054391799\\
9	0.605543767578081\\
10	0.73237497592992\\
11	0.812571328313163\\
12	0.864601691377783\\
13	0.899437489711912\\
14	0.923481969638564\\
15	0.940542969034113\\
16	0.952949756794903\\
17	0.962169952409726\\
18	0.969154761643755\\
19	0.97453696024985\\
20	0.978747591003244\\
21	0.982086648936229\\
22	0.984767019023065\\
23	0.986942453659052\\
24	0.988725787901452\\
25	0.990201043760011\\
26	0.991431626180081\\
27	0.992465969329093\\
28	0.993341488723885\\
29	0.994087388411391\\
30	0.994726682127484\\
31	0.99527766692802\\
32	0.99575501022999\\
33	0.996170560438255\\
34	0.996533957600874\\
35	0.996853097800407\\
};
\addplot [only marks,color=mycolor3, line width=2.0pt, draw=none, mark=triangle, mark options={solid, mycolor3}, forget plot]
  table[row sep=crcr]{%
8	0.4716\\
9	0.6062\\
10	0.6922\\
11	0.7546\\
12	0.803\\
13	0.8416\\
14	0.8737\\
15	0.9016\\
16	0.924\\
17	0.9413\\
18	0.9555\\
19	0.9652\\
20	0.9746\\
21	0.9799\\
22	0.9849\\
23	0.989\\
24	0.9911\\
25	0.9932\\
26	0.9947\\
27	0.9958\\
28	0.9969\\
29	0.9977\\
30	0.9985\\
31	0.9989\\
32	0.9991\\
33	0.9993\\
34	0.9997\\
35	0.9997\\
};
\addplot[color=black,line width=0.25mm, ->,>=stealth
]
table[row sep=crcr]{%
8  0.99 \\
14   0.6\\
};
\node[above] at (axis cs:14,0.4) {\small{$\mathbf{ a_s=[500 \colon 200 \colon 1500]}$}};
\end{axis}

\begin{axis}[%
width=2.911in,
height=2.051in,
at={(0in,0in)},
scale only axis,
xmin=0,
xmax=1,
ymin=0,
ymax=1,
axis line style={draw=none},
ticks=none,
axis x line*=bottom,
axis y line*=left,
legend style={legend cell align=left, align=left, draw=white!15!black}
]
\end{axis}
\end{tikzpicture}%
\caption{\small{Number of LEOSATs ($N_s =1300$).}}
\label{Fig:PAoI_Distribution_Ns_1300}
  \end{subfigure}
  \caption{\small{Spatial distribution of the conditional mean PAoI ($F_{\bar{A}_k}(\gamma_{\rm th},t)$) for distinct LEOSATs altitude $a_s$.}}
\end{figure*}
\begin{theorem}[Spatial Distribution of Conditional Mean PAoI]
Given $\psi_n$ and SINR threshold $\gamma_{\rm th}$ and for a given time deadline $t$. The distribution of the conditional mean PAoI is approximated as:
\label{thm:Spatial_Distribution_of_Conditional_Mean_PAoI}
\begin{align}\label{Eq:Spatial_Distribution_of_Conditional_Mean_PAoI}
F_{\bar{A}_k}(\gamma_{\rm th},t) = 1 - \mathcal{I}_\omega \left(\alpha,\beta\right),
\end{align}
where $a$ and $b$ are defined in \eqref{Eq:Beta}, and \\
$\omega = 2\left(t-\dfrac{2}{(1-P_{v})} + \dfrac{1}{\lambda_a}+ \dfrac{1}{c}\int_{a_s}^{ {\rm r_{\rm max}}} r_o f_{R_o}(r_o) dr_o\right)^{-1}$.
\end{theorem}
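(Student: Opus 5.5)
The plan is to write the conditional mean PAoI $\bar{A}_k$ as a deterministic, monotone function of the single random variable $X \triangleq P^{\rm U}_{\rm cov}(\gamma_{\rm th}|\Psi_n)$. Then $\mathbb{P}(\bar{A}_k\le t)$ becomes a tail event for $X$, and I would evaluate it with the Beta-approximated meta distribution in \eqref{Eq:Meta_Distribution_2}.

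\textbf{Step 1 (conditional mean in terms of $X$).} I start from Definition~\ref{thm:Cond_peak_AoI} and take the expectation over fading given $\Psi_n$. The terms are handled as follows:
\begin{itemize}
\item $T^S_k$ is geometric with parameter $X$, so it contributes $1/X$.
\item $T^W_k$ is geometric with parameter $1-P_v$, so it contributes $1/(1-P_v)$.
\item $T^I_k$ is exponential with rate $\lambda_a$, so it contributes $1/\lambda_a$.
\item $T^{\rm U}_P$ contributes the deterministic constant $\frac{1}{c}\int_{a_s}^{r_{\rm max}} r_o f_{R_o}(r_o)\,dr_o$.
\end{itemize}
Since $T_{k-1}$ and $T_k$ both equal a service time plus a waiting time, stationarity gives
\begin{align*}
\bar{A}_k = \frac{2}{X} + \frac{2}{1-P_v} + \frac{1}{\lambda_a} + \frac{1}{c}\int_{a_s}^{r_{\rm max}} r_o f_{R_o}(r_o)\,dr_o ,
\end{align*}
which matches the $b=1$ case of Theorem~\ref{thm:bth_Moment_of_Conditional_Mean_PAoI}. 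The point to note is that only $X$ depends on the realization $\Psi_n$. The vacation probability $P_v$ depends on the satellite process, and both it and the propagation term are treated as constants in the conditioning. I expect fixing this conditioning convention to be the main obstacle, because every constant must sit consistently on the same side of the inequality.

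\textbf{Step 2 (event inversion).} Let $C$ denote the sum of the three constant terms. The map $X\mapsto 2/X + C$ is strictly decreasing on $(0,1]$. Hence, whenever $t > C$,
\begin{align*}
\{\bar{A}_k \le t\} = \Bigl\{X \ge \tfrac{2}{t - C}\Bigr\}.
\end{align*}
This identifies $\omega = 2(t-C)^{-1}$. Matching this with the statement requires care with the signs in front of the $1/\lambda_a$ and propagation terms; I will derive them from the display above rather than copy them. I also need two edge cases. If $\omega\ge 1$, the probability is (essentially) zero. If $t\le C$, the event is empty.

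\textbf{Step 3 (Beta approximation).} With $\omega$ identified, $F_{\bar{A}_k}(\gamma_{\rm th},t) = \mathbb{P}(X>\omega) = \bar{F}_{\rm Cov}(\gamma_{\rm th},\omega)$, since $X$ has a continuous (Beta) law. I then substitute \eqref{Eq:Meta_Distribution_2} with reliability threshold $\tau=\omega$. The parameters $\alpha,\beta$ come from \eqref{Eq:Beta}, computed from $M^1$ and $M^2$ of Theorem~\ref{thm:bth_Moment_of_Conditional_Coverage_Probability}, which gives $1-\mathcal{I}_\omega(\alpha,\beta)$. The only approximations are the ones inherited from the Beta moment matching and from the Gamma-bound step inside Theorem~\ref{thm:bth_Moment_of_Conditional_Coverage_Probability}, which is why the result is stated as an approximation.
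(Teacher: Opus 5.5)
Your proposal is correct and follows the paper's proof step for step: express $\bar{A}_k = 2/P^{\rm U}_{\rm cov}(\gamma_{\rm th}|\Psi_n) + 2/(1-P_v) + 1/\lambda_a + \frac{1}{c}\int_{a_s}^{r_{\rm max}} r_o f_{R_o}(r_o)\,dr_o$, invert the monotone map into a tail event for the coverage probability, and evaluate it with the Beta-approximated meta distribution at $\tau=\omega$. Your caution about signs is justified, since the paper's own inversion gives $\omega = 2\bigl(t - \frac{2}{1-P_v} - \frac{1}{\lambda_a} - \frac{1}{c}\int_{a_s}^{r_{\rm max}} r_o f_{R_o}(r_o)\,dr_o\bigr)^{-1}$ (the plus signs in the stated $\omega$ are a typo), and your handling of the edge cases $t\le C$ and $\omega\ge 1$ is care the paper omits.
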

\begin{IEEEproof}
See Appendix~\ref{app:Spatial_Distribution_of_Conditional_Mean_PAoI}.
\end{IEEEproof} 

Having established the theoretical derivations in the preceding section, we now turn to the numerical results to reveal useful insights.

\section{Numerical Results} \label{Sec:Numerical_Results}
\begin{figure*}[!t]
  \centering
  \begin{subfigure}[t]{0.5\textwidth}
      \centering
%
%
\definecolor{mycolor1}{rgb}{0.87059,0.49020,0.00000}%
\definecolor{mycolor2}{rgb}{0.00000,0.49804,0.00000}%
\definecolor{mycolor3}{rgb}{0.60000,0.20000,0.00000}%
\definecolor{mycolor4}{rgb}{0.00000,0.74902,0.74902}%
\definecolor{mycolor5}{rgb}{0.74902,0.00000,0.74902}%
\begin{tikzpicture}

\begin{axis}[%
width=2.911in,
height=2.051in,
at={(0in,0in)},
scale only axis,
xmin=500,
xmax=1500,
xlabel style={font=\color{white!15!black}},
xlabel={$a_s$ [km]},
ymin=6,
ymax=18,
ylabel style={font=\color{white!15!black}},
ylabel={$M^1(\bar{A_k})$ [time slots]},
axis background/.style={fill=white},
xmajorgrids,
xminorgrids,
ymajorgrids,
yminorgrids,
legend style={at={(0.02,0.35)},legend columns=2,column sep=5pt, anchor=south west, legend cell align=left, align=left, draw=white!15!black,font=\footnotesize,legend image post style={scale=0.5}} 
]
\addplot [color=black, line width=2.0pt]
  table[row sep=crcr]{%
500	7.50188166730092\\
700	6.44518548124496\\
900	6.15207527129142\\
1100	6.11726556947773\\
1300	6.17732746341281\\
1500	6.28066661406484\\
};
\addlegendentry{Analysis, $N_s=500$}

\addplot [only marks,color=black, line width=2.0pt, draw=none, mark=triangle, mark options={solid, black}]
  table[row sep=crcr]{%
500	7.50292615557025\\
700	6.44766222580918\\
900	6.15654930418309\\
1100	6.12318392180485\\
1300	6.18352799155919\\
1500	6.29089804791968\\
};
\addlegendentry{Simu, $N_s=500$}

\addplot [color=mycolor1, line width=2.0pt, forget plot]
  table[row sep=crcr]{%
500	6.89346930306762\\
700	6.25073733868191\\
900	6.16180617142358\\
1100	6.24599413427484\\
1300	6.39876928582156\\
1500	6.60675953731001\\
};
\addlegendimage{color=mycolor1, line width=2.0pt, forget plot}
\addlegendentry{Analysis, $N_s=700$}

\addplot [only marks,color=mycolor1, line width=2.0pt, draw=none, mark=*, mark options={solid, mycolor1}, forget plot]
  table[row sep=crcr]{%
500	6.89759073264184\\
700	6.25498069733527\\
900	6.16637165079461\\
1100	6.25117418820336\\
1300	6.40650918080838\\
1500	6.62497829882547\\
};
\addlegendimage{only marks,color=mycolor1, line width=2.0pt, draw=none, mark=*, mark options={solid, mycolor1}, forget plot}
\addlegendentry{Simu, $N_s=700$}

\addplot [color=mycolor2, line width=2.0pt, forget plot]
  table[row sep=crcr]{%
500	6.59536326575063\\
700	6.21299203034182\\
900	6.26178041893328\\
1100	6.44413189851513\\
1300	6.7042781074439\\
1500	7.05204272444423\\
};
\addlegendimage{color=mycolor2, line width=2.0pt, forget plot}
\addlegendentry{Analysis, $N_s=900$}

\addplot [only marks,color=mycolor2, line width=2.0pt, draw=none, mark=triangle, mark options={solid, rotate=90, mycolor2}, forget plot]
  table[row sep=crcr]{%
500	6.60074816706974\\
700	6.21601149053826\\
900	6.26669196903357\\
1100	6.44660377303544\\
1300	6.70386713906044\\
1500	7.06873115178265\\
};
\addlegendimage{only marks,color=mycolor2, line width=2.0pt, draw=none, mark=triangle, mark options={solid, rotate=90, mycolor2}, forget plot}
\addlegendentry{Simu, $N_s=900$}

\addplot [color=mycolor3, line width=2.0pt, forget plot]
  table[row sep=crcr]{%
500	6.44226127904128\\
700	6.25081153363729\\
900	6.41269523532657\\
1100	6.70058966752926\\
1300	7.09643048841175\\
1500	7.62841151614451\\
};
\addlegendimage{color=mycolor3, line width=2.0pt, forget plot}
\addlegendentry{Analysis, $N_s=1100$}

\addplot [only marks,color=mycolor3, line width=2.0pt, draw=none, mark=square, mark options={solid, mycolor3}, forget plot]
  table[row sep=crcr]{%
500	6.44848752207689\\
700	6.25264740931968\\
900	6.40707211217255\\
1100	6.69945630226864\\
1300	7.0809805855558\\
1500	7.64921419835223\\
};
\addlegendimage{only marks,color=mycolor3, line width=2.0pt, draw=none, mark=square, mark options={solid, mycolor3}, forget plot}
\addlegendentry{Simu, $N_s=1100$}

\addplot [color=blue, line width=2.0pt, forget plot]
  table[row sep=crcr]{%
500	6.37041592038898\\
700	6.33393280485361\\
900	6.60404884536709\\
1100	7.01653841097008\\
1300	7.58384632323186\\
1500	8.35574610102325\\
};
\addlegendimage{color=blue, line width=2.0pt, forget plot}
\addlegendentry{Analysis, $N_s=1300$}

\addplot [only marks,color=blue, line width=2.0pt, draw=none, mark size=4.0pt, mark=x, mark options={solid, blue}, forget plot]
  table[row sep=crcr]{%
500	6.37430184033973\\
700	6.33586828027564\\
900	6.59824800603143\\
1100	7.00355228827415\\
1300	7.54864581798437\\
1500	8.36800845603595\\
};
\addlegendimage{only marks,color=blue, line width=1.5pt, draw=none, mark size=4.0pt, mark=x, mark options={solid, blue}, forget plot}
\addlegendentry{Simu, $N_s=1300$}

\addplot [color=red, line width=2.0pt, forget plot]
  table[row sep=crcr]{%
500	6.34999402258377\\
700	6.44942647731167\\
900	6.83396208907626\\
1100	7.39671723609233\\
1300	8.1791976958579\\
1500	9.26136971814269\\
};
\addlegendimage{color=red, line width=2.0pt, forget plot}
\addlegendentry{Analysis, $N_s=1500$}

\addplot [only marks,color=red, line width=2.0pt, draw=none, mark=triangle, mark options={solid, rotate=180, red}, forget plot]
  table[row sep=crcr]{%
500	6.35659414892818\\
700	6.44230624570032\\
900	6.80704001689557\\
1100	7.36537315669896\\
1300	8.14802573155321\\
1500	9.30146770672553\\
};
\addlegendimage{only marks,color=red, line width=2.0pt, draw=none, mark=triangle, mark options={solid, rotate=180, red}, forget plot}
\addlegendentry{Simu, $N_s=1500$}

\addplot [color=mycolor4, line width=2.0pt, forget plot]
  table[row sep=crcr]{%
500	6.36520953849155\\
700	6.59160974839532\\
900	7.1036984867859\\
1100	7.8479720072822\\
1300	8.89900341257902\\
1500	10.3803926016584\\
};
\addlegendimage{color=mycolor4, line width=2.0pt, forget plot}
\addlegendentry{Analysis, $N_s=1700$}

\addplot [only marks,color=mycolor4, line width=2.0pt, draw=none, mark=triangle, mark options={solid, rotate=270, mycolor4}, forget plot]
  table[row sep=crcr]{%
500	6.36565679371472\\
700	6.56486480231359\\
900	7.06190102543079\\
1100	7.7907329034475\\
1300	8.87948400058774\\
1500	10.3872514751117\\
};
\addlegendimage{only marks,color=mycolor4, line width=2.0pt, draw=none, mark=triangle, mark options={solid, rotate=270, mycolor4}, forget plot}
\addlegendentry{Simu, $N_s=1700$}

\addplot [color=mycolor5, line width=2.0pt, forget plot]
  table[row sep=crcr]{%
500	6.4070062514208\\
700	6.75823117563892\\
900	7.41601260313497\\
1100	8.37900190363852\\
1300	9.7637918967143\\
1500	11.756635579803\\
};
\addlegendimage{color=mycolor5, line width=2.0pt, forget plot}
\addlegendentry{Analysis, $N_s=1900$}

\addplot [only marks,color=mycolor5, line width=2.0pt, draw=none, mark=diamond, mark options={solid, mycolor5}, forget plot]
  table[row sep=crcr]{%
500	6.40299715449584\\
700	6.73083863507546\\
900	7.35526682030321\\
1100	8.31175663329155\\
1300	9.75183579418865\\
1500	11.8691137363744\\
};
\addlegendimage{only marks,color=mycolor5, line width=2.0pt, draw=none, mark=diamond, mark options={solid, mycolor5}, forget plot}
\addlegendentry{Simu, $N_s=1900$}
\end{axis}

\begin{axis}[%
width=2.911in,
height=2.051in,
at={(0in,0in)},
scale only axis,
xmin=0,
xmax=1,
ymin=0,
ymax=1,
axis line style={draw=none},
ticks=none,
axis x line*=bottom,
axis y line*=left,
legend style={legend cell align=left, align=left, draw=white!15!black}
]
\end{axis}
\end{tikzpicture}%
\caption{\small{Spatial average of the Conditional PAoI with a distinct number of LEOSATs $N_s$.}}
\label{Fig:Avg_PAoI_vs_h}
  \end{subfigure}%
  ~ 
    \begin{subfigure}[t]{0.5\textwidth}
    \centering
%
%
\definecolor{mycolor1}{rgb}{0.87059,0.49020,0.00000}%
\definecolor{mycolor2}{rgb}{0.00000,0.49804,0.00000}%
\definecolor{mycolor3}{rgb}{0.49412,0.18431,0.55686}%
\begin{tikzpicture}

\begin{axis}[%
width=2.911in,
height=2.051in,
at={(0in,0in)},
scale only axis,
xmin=500,
xmax=2000,
xlabel style={font=\color{white!15!black}},
xlabel={Ns},
ymin=6,
ymax=16,
ylabel style={font=\color{white!15!black}},
ylabel={Mean PAoI [time slots]},
axis background/.style={fill=white},
xmajorgrids,
xminorgrids,
ymajorgrids,
yminorgrids,
legend style={at={(0.02,0.52)},legend columns=2,column sep=5pt, anchor=south west, legend cell align=left, align=left, draw=white!15!black,font=\footnotesize,legend image post style={scale=0.5}} 
]
\addplot [color=black, line width=2.0pt]
  table[row sep=crcr]{%
500	7.50188166730092\\
700	6.89346930306762\\
900	6.59536326575063\\
1100	6.44226127904128\\
1300	6.37041592038898\\
1500	6.34999402258377\\
1700	6.36520953849155\\
1900	6.4070062514208\\
};
\addlegendentry{Analysis, $a_s=500$}

\addplot [only marks,color=black, line width=2.0pt, draw=none,mark size=3.0pt,  mark=triangle, mark options={solid, black}]
  table[row sep=crcr]{%
500	7.50292615557025\\
700	6.89759073264184\\
900	6.60074816706974\\
1100	6.44848752207689\\
1300	6.37430184033973\\
1500	6.35659414892818\\
1700	6.36565679371472\\
1900	6.40299715449584\\
};
\addlegendentry{Simu, $a_s=500$}

\addplot [color=mycolor1, line width=2.0pt, forget plot]
  table[row sep=crcr]{%
500	6.44518548124496\\
700	6.25073733868191\\
900	6.21299203034182\\
1100	6.25081153363729\\
1300	6.33393280485361\\
1500	6.44942647731167\\
1700	6.59160974839532\\
1900	6.75823117563892\\
};
\addlegendimage{color=mycolor1, line width=2.0pt, forget plot}
\addlegendentry{Analysis, $a_s=700$}
\addplot [only marks,color=mycolor1, line width=2.0pt, draw=none,mark size=2.0pt,  mark=*, mark options={solid, mycolor1}, forget plot]
  table[row sep=crcr]{%
500	6.44766222580918\\
700	6.25498069733527\\
900	6.21601149053826\\
1100	6.25264740931968\\
1300	6.33586828027564\\
1500	6.44230624570032\\
1700	6.56486480231359\\
1900	6.73083863507546\\
};
\addlegendimage{only marks,color=mycolor1, line width=2.0pt, draw=none,mark size=3.0pt, mark=*, mark options={solid, mycolor1}, forget plot}
\addlegendentry{Simu, $a_s=700$}
\addplot [color=mycolor2, line width=2.0pt, forget plot]
  table[row sep=crcr]{%
500	6.15207527129142\\
700	6.16180617142358\\
900	6.26178041893328\\
1100	6.41269523532657\\
1300	6.60404884536709\\
1500	6.83396208907626\\
1700	7.1036984867859\\
1900	7.41601260313497\\
};
\addlegendimage{color=mycolor2, line width=2.0pt, forget plot}
\addlegendentry{Analysis, $a_s=900$}
\addplot [only marks,color=mycolor2, line width=2.0pt, draw=none,mark size=3.0pt, mark=triangle, mark options={solid, rotate=90, mycolor2}, forget plot]
  table[row sep=crcr]{%
500	6.15654930418309\\
700	6.16637165079461\\
900	6.26669196903357\\
1100	6.40707211217255\\
1300	6.59824800603143\\
1500	6.80704001689557\\
1700	7.06190102543079\\
1900	7.35526682030321\\
};
\addlegendimage{only marks,color=mycolor2, line width=2.0pt, draw=none,mark size=4.0pt, mark=triangle, mark options={solid, rotate=90, mycolor2}, forget plot}
\addlegendentry{Simu, $a_s=900$}
\addplot [color=red, line width=2.0pt, forget plot]
  table[row sep=crcr]{%
500	6.11726556947773\\
700	6.24599413427484\\
900	6.44413189851513\\
1100	6.70058966752926\\
1300	7.01653841097008\\
1500	7.39671723609233\\
1700	7.8479720072822\\
1900	8.37900190363852\\
};
\addlegendimage{color=red, line width=2.0pt, forget plot}
\addlegendentry{Analysis, $a_s=1100$}
\addplot [only marks,color=red, line width=2.0pt, draw=none,mark size=2.0pt,  mark=square, mark options={solid, red}, forget plot]
  table[row sep=crcr]{%
500	6.12318392180485\\
700	6.25117418820336\\
900	6.44660377303544\\
1100	6.69945630226864\\
1300	7.00355228827415\\
1500	7.36537315669896\\
1700	7.7907329034475\\
1900	8.31175663329155\\
};
\addlegendimage{only marks,color=red, line width=1.5pt, draw=none, mark size=4.0pt, mark=square, mark options={solid, red}, forget plot}
\addlegendentry{Simu, $a_s=1100$}
\addplot [color=blue, line width=2.0pt, forget plot]
  table[row sep=crcr]{%
500	6.17732746341281\\
700	6.39876928582156\\
900	6.7042781074439\\
1100	7.09643048841175\\
1300	7.58384632323186\\
1500	8.1791976958579\\
1700	8.89900341257902\\
1900	9.7637918967143\\
};
\addlegendimage{color=blue, line width=2.0pt, forget plot}
\addlegendentry{Analysis, $a_s=1300$}
\addplot [only marks,color=blue, line width=2.0pt, draw=none,mark size=4.0pt,  mark=x, mark options={solid, blue}, forget plot]
  table[row sep=crcr]{%
500	6.18352799155919\\
700	6.40650918080838\\
900	6.70386713906044\\
1100	7.0809805855558\\
1300	7.54864581798437\\
1500	8.14802573155321\\
1700	8.87948400058774\\
1900	9.75183579418865\\
};
\addlegendimage{only marks,color=blue, line width=1.5pt, draw=none, mark size=5.0pt, mark=x, mark options={solid, blue}, forget plot}
\addlegendentry{Simu, $a_s=1300$}
\addplot [color=mycolor3, line width=2.0pt, forget plot]
  table[row sep=crcr]{%
500	6.28066661406484\\
700	6.60675953731001\\
900	7.05204272444423\\
1100	7.62841151614451\\
1300	8.35574610102325\\
1500	9.26136971814269\\
1700	10.3803926016584\\
1900	11.756635579803\\
};
\addlegendimage{color=mycolor3, line width=2.0pt, forget plot}
\addlegendentry{Analysis, $a_s=1500$}
\addplot [only marks,color=mycolor3, line width=2.0pt, draw=none,mark size=3.0pt,  mark=diamond, mark options={solid, mycolor3}, forget plot]
  table[row sep=crcr]{%
500	6.29089804791968\\
700	6.62497829882547\\
900	7.06873115178265\\
1100	7.64921419835223\\
1300	8.36800845603595\\
1500	9.30146770672553\\
1700	10.3872514751117\\
1900	11.8691137363744\\
};
\addlegendimage{only marks,color=mycolor3, line width=2.0pt, draw=none,mark size=4.0pt,  mark=diamond, mark options={solid, mycolor3}, forget plot}
\addlegendentry{Simu, $a_s=1500$}
\end{axis}

\begin{axis}[%
width=2.911in,
height=2.051in,
at={(0in,0in)},
scale only axis,
xmin=0,
xmax=1,
ymin=0,
ymax=1,
axis line style={draw=none},
ticks=none,
axis x line*=bottom,
axis y line*=left,
legend style={legend cell align=left, align=left, draw=white!15!black}
]
\end{axis}
\end{tikzpicture}%
\caption{\small{Spatial average of the Conditional PAoI with a distinct LEOSAT altitude $a_s$.}}
\label{Fig:Avg_PAoI_vs_Ns}
  \end{subfigure}
  \caption{\small{First moment of the conditional mean PAoI $M^1(\bar{A}_k)$ for different number of LEOSATs $N_s$ at distinct LEOSAT altitude $a_s$.}}
\end{figure*}

\begin{figure}[t]
\centering
%
%
\definecolor{mycolor1}{rgb}{0.92941,0.69412,0.12549}%
\definecolor{mycolor2}{rgb}{0.00000,0.49804,0.00000}%
\begin{tikzpicture}

\begin{axis}[%
width=2.911in,
height=2.051in,
at={(0in,0in)},
scale only axis,
xmin=15,
xmax=45,
xlabel style={font=\color{white!15!black}},
xlabel={$\theta_b$ [degree]},
ymin=5,
ymax=35,
ylabel style={font=\color{white!15!black}},
ylabel={$M^1(\bar{A_k})$ [time slots]},
axis background/.style={fill=white},
xmajorgrids,
xminorgrids,
ymajorgrids,
yminorgrids,
legend style={at={(1.0,0.99)}, legend cell align=left, align=left, draw=white!15!black}
]
\addplot [color=black, line width=2.0pt]
  table[row sep=crcr]{%
18	29.3744074802142\\
22.5	19.9179783082832\\
30	12.5931351755522\\
45	7.50188166730092\\
};
\addlegendentry{Analysis, $N_s=500$}

\addplot [only marks,color=black, line width=2.0pt, draw=none, mark=triangle, mark options={solid, black}]
  table[row sep=crcr]{%
18	29.3720761527429\\
22.5	19.9150427996349\\
30	12.5883896766342\\
45	7.4839237453424\\
};
\addlegendentry{Simu, $N_s=500$}

\addplot [color=mycolor1, line width=2.0pt, forget plot]
  table[row sep=crcr]{%
18	17.1922046898082\\
22.5	12.4798641722853\\
30	8.85841320566414\\
45	6.50586180535371\\
};
\addlegendimage{color=mycolor1, line width=2.0pt, forget plot}
\addlegendentry{Analysis, $N_s=1000$}
\addplot [only marks,color=mycolor1, line width=2.0pt, draw=none, mark=*, mark options={solid, mycolor1}, forget plot]
  table[row sep=crcr]{%
18	17.1927617862278\\
22.5	12.4771978603708\\
30	8.84598639194824\\
45	6.42635701520983\\
};
\addlegendimage{only marks,color=mycolor1, line width=2.0pt, draw=none, mark=*, mark options={solid, mycolor1}, forget plot}
\addlegendentry{Simu, $N_s=1000$}
\addplot [color=mycolor2, line width=2.0pt, forget plot]
  table[row sep=crcr]{%
18	13.1460345725778\\
22.5	10.0254485980396\\
30	7.66591143040771\\
45	6.34999402258377\\
};
\addlegendimage{color=mycolor2, line width=2.0pt, forget plot}
\addlegendentry{Analysis, $N_s=1500$}
\addplot [only marks,color=mycolor2, line width=1.5pt, draw=none, mark size=4.0pt, mark=x, mark options={solid, mycolor2}, forget plot]
  table[row sep=crcr]{%
18	13.1501321178238\\
22.5	10.0207466706807\\
30	7.63504710108574\\
45	6.15243688537769\\
};
\addlegendimage{only marks,color=mycolor2, line width=1.5pt, draw=none, mark size=4.0pt, mark=x, mark options={solid, mycolor2}, forget plot}
\addlegendentry{Simu, $N_s=1500$}
\addplot [color=red, line width=2.0pt, forget plot]
  table[row sep=crcr]{%
18	11.1371988551447\\
22.5	8.82286673029503\\
30	7.12131976179286\\
45	6.43607072465436\\
};
\addlegendimage{color=red, line width=2.0pt, forget plot}
\addlegendentry{Analysis, $N_s=2000$}
\addplot [only marks,color=red, line width=2.0pt, draw=none, mark=square, mark options={solid, red}, forget plot]
  table[row sep=crcr]{%
18	11.1471885798227\\
22.5	8.81508306160272\\
30	7.06179002220636\\
45	6.06560287403132\\
};
\addlegendimage{only marks,color=red, line width=2.0pt, draw=none, mark=square, mark options={solid, red}, forget plot}
\addlegendentry{Simu, $N_s=2000$}
\end{axis}

\begin{axis}[%
width=2.911in,
height=2.051in,
at={(0in,0in)},
scale only axis,
xmin=0,
xmax=1,
ymin=0,
ymax=1,
axis line style={draw=none},
ticks=none,
axis x line*=bottom,
axis y line*=left,
legend style={legend cell align=left, align=left, draw=white!15!black}
]
\end{axis}
\end{tikzpicture}%
\caption{\small{First moment of the conditional mean PAoI $M^1(\bar{A}_k)$ for different number of LEOSATs $N_s$ with a LEOSAT altitude $a_s =500$ km.}}
\label{Fig:AoI_vs_Different_Theta_b}
\vspace{0.2in}
\end{figure}

\begin{figure}[t]
\centering
%
%
\definecolor{mycolor1}{rgb}{0.92941,0.69412,0.12549}%
\definecolor{mycolor2}{rgb}{0.00000,0.49804,0.00000}%
\begin{tikzpicture}

\begin{axis}[%
width=2.911in,
height=2.051in,
at={(0in,0in)},
scale only axis,
xmin=15,
xmax=45,
xlabel style={font=\color{white!15!black}},
xlabel={$\theta_b$ [degree]},
ymin=6,
ymax=13,
ylabel style={font=\color{white!15!black}},
ylabel={$M^1(\bar{A_k})$ [time slots]},
axis background/.style={fill=white},
xmajorgrids,
xminorgrids,
ymajorgrids,
yminorgrids,
legend style={at={(0.02,0.30)}, anchor=south west, legend columns=1,column sep=2pt,legend cell align=left, align=left, draw=white!15!black}
]
\addplot [color=black, line width=2.0pt]
  table[row sep=crcr]{%
18	7.7651457393046\\
22.5	6.80042835890947\\
30	6.1930325022156\\
45	6.28066661406484\\
};
\addlegendentry{Analysis, $N_s=500$}

\addplot [only marks,color=black, line width=2.0pt, draw=none, mark=triangle, mark options={solid, black}]
  table[row sep=crcr]{%
18	7.76495453624441\\
22.5	6.80417711017998\\
30	6.21113883440796\\
45	6.31111955191801\\
};
\addlegendentry{Simu, $N_s=500$}

\addplot [color=mycolor1, line width=2.0pt, forget plot]
  table[row sep=crcr]{%
18	6.59963605641675\\
22.5	6.25872588296305\\
30	6.23812449465216\\
45	7.32277409204004\\
};
\addlegendimage{color=mycolor1, line width=2.0pt, forget plot}
\addlegendentry{Analysis, $N_s=1000$}
\addplot [only marks,color=mycolor1, line width=2.0pt, draw=none, mark=*, mark options={solid, mycolor1}, forget plot]
  table[row sep=crcr]{%
18	6.62227949742286\\
22.5	6.30130396426188\\
30	6.29778644776598\\
45	7.38881010774589\\
};
\addlegendimage{only marks,color=mycolor1, line width=2.0pt, draw=none, mark=*, mark options={solid, mycolor1}, forget plot}
\addlegendentry{Simu, $N_s=1000$}
\addplot [color=mycolor2, line width=2.0pt, forget plot]
  table[row sep=crcr]{%
18	6.35764489526113\\
22.5	6.29775693477614\\
30	6.61541008620381\\
45	9.26136971814269\\
};
\addlegendimage{color=mycolor2, line width=2.0pt, forget plot}
\addlegendentry{Analysis, $N_s=1500$}
\addplot [only marks,color=mycolor2, line width=1.5pt, draw=none, mark size=4.0pt, mark=x, mark options={solid, mycolor2}, forget plot]
  table[row sep=crcr]{%
18	6.41412922193906\\
22.5	6.38356045372856\\
30	6.73936844491812\\
45	9.27315574283178\\
};
\addlegendimage{only marks,color=mycolor2, line width=1.5pt, draw=none, mark size=4.0pt, mark=x, mark options={solid, mycolor2}, forget plot}
\addlegendentry{Simu, $N_s=1500$}
\addplot [color=red, line width=2.0pt, forget plot]
  table[row sep=crcr]{%
18	6.37477160900633\\
22.5	6.52394308736676\\
30	7.21421865713008\\
45	12.5576250743895\\
};
\addlegendimage{color=red, line width=2.0pt, forget plot}
\addlegendentry{Analysis, $N_s=2000$}
\addplot [only marks,color=red, line width=2.0pt, draw=none, mark=square, mark options={solid, red}, forget plot]
  table[row sep=crcr]{%
18	6.484648274824\\
22.5	6.67864861226177\\
30	7.47914263656751\\
45	12.86055639217\\
};
\addlegendimage{only marks,color=red, line width=2.0pt, draw=none, mark=square, mark options={solid, red}, forget plot}
\addlegendentry{Simu, $N_s=2000$}
 \end{axis}

\begin{axis}[%
width=2.911in,
height=2.051in,
at={(0in,0in)},
scale only axis,
xmin=0,
xmax=1,
ymin=0,
ymax=1,
axis line style={draw=none},
ticks=none,
axis x line*=bottom,
axis y line*=left,
legend style={legend cell align=left, align=left, draw=white!15!black}
]
\end{axis}
\end{tikzpicture}%
\caption{\small{First moment of the conditional mean PAoI $M^1(\bar{A}_k)$ for different number of LEOSATs $N_s$ with a LEOSAT altitude $a_s =1500$ km.}}
\label{Fig:AoI_vs_Different_Theta_b_as_1500}
\vspace{0.2in}
\end{figure}

\begin{figure}[t]
\centering
%
%
\definecolor{mycolor1}{rgb}{0.92941,0.69412,0.12549}%
\begin{tikzpicture}

\begin{axis}[%
width=3.011in,
height=2.051in,
at={(0in,0in)},
scale only axis,
xmin=0.1,
xmax=0.9,
xlabel style={font=\color{white!15!black}},
xlabel={$\lambda_a$},
ymin=4,
ymax=35,
ylabel style={font=\color{white!15!black}},
ylabel={$M^1(\bar{A_k})$ [time slots]},
axis background/.style={fill=white},
xmajorgrids,
xminorgrids,
ymajorgrids,
yminorgrids,
legend style={at={(0.02,0.64)}, anchor=south west,legend cell align=left, align=left,legend columns=2,column sep=5pt, draw=white!15!black}
]
\addplot [color=black, line width=2.0pt]
  table[row sep=crcr]{%
0.1	14.3134884012289\\
0.3	7.63026300235801\\
0.5	6.28066661406484\\
0.7	5.69345751121995\\
0.9	5.36087124364098\\
};
\addlegendentry{Analysis, $N_s=500$}

\addplot [only marks,color=black, line width=2.0pt, draw=none, mark=triangle, mark options={solid, black}]
  table[row sep=crcr]{%
0.1	14.342154593663\\
0.3	7.65892919479203\\
0.5	6.30933280649886\\
0.7	5.72212370365398\\
0.9	5.389537436075\\
};
\addlegendentry{Simu, $N_s=500$}

\addplot [color=mycolor1, line width=2.0pt, forget plot]
  table[row sep=crcr]{%
0.1	15.3555958792041\\
0.3	8.67237048033321\\
0.5	7.32277409204004\\
0.7	6.73556498919515\\
0.9	6.40297872161618\\
};
\addlegendimage{color=mycolor1, line width=2.0pt, forget plot}
\addlegendentry{Analysis, $N_s=1000$}
\addplot [only marks,color=mycolor1, line width=2.0pt, draw=none, mark=*, mark options={solid, mycolor1}, forget plot]
  table[row sep=crcr]{%
0.1	15.387484890056\\
0.3	8.70425949118506\\
0.5	7.35466310289189\\
0.7	6.767454000047\\
0.9	6.43486773246803\\
};
\addlegendimage{only marks,color=mycolor1, line width=2.0pt, draw=none, mark=*, mark options={solid, mycolor1}, forget plot}
\addlegendentry{Simu, $N_s=1000$}
\addplot [color=red, line width=2.0pt, forget plot]
  table[row sep=crcr]{%
0.1	17.2941915053068\\
0.3	10.6109661064359\\
0.5	9.26136971814269\\
0.7	8.67416061529781\\
0.9	8.34157434771884\\
};
\addlegendimage{color=red, line width=2.0pt, forget plot}
\addlegendentry{Analysis, $N_s=1500$}
\addplot [only marks,color=red, line width=2.0pt, draw=none, mark=square, mark options={solid, red}, forget plot]
  table[row sep=crcr]{%
0.1	17.382201286942\\
0.3	10.6989758880711\\
0.5	9.34937949977795\\
0.7	8.76217039693306\\
0.9	8.42958412935409\\
};
\addlegendimage{only marks,color=red, line width=2.0pt, draw=none, mark=square, mark options={solid, red}, forget plot}
\addlegendentry{Simu, $N_s=1500$}
\addplot [color=blue, line width=2.0pt, forget plot]
  table[row sep=crcr]{%
0.1	20.5904468615536\\
0.3	13.9072214626827\\
0.5	12.5576250743895\\
0.7	11.9704159715446\\
0.9	11.6378297039657\\
};
\addlegendimage{color=blue, line width=2.0pt, forget plot}
\addlegendentry{Analysis, $N_s=2000$}
\addplot [only marks,color=blue, line width=1.5pt, draw=none, mark size=4.0pt, mark=x, mark options={solid, blue}, forget plot]
  table[row sep=crcr]{%
0.1	20.7419475593625\\
0.3	14.0587221604915\\
0.5	12.7091257721984\\
0.7	12.1219166693535\\
0.9	11.7893304017745\\
};
\addlegendimage{only marks,color=blue, line width=1.5pt, draw=none, mark size=4.0pt, mark=x, mark options={solid, blue}, forget plot}
\addlegendentry{Simu, $N_s=2000$}
\end{axis}

\begin{axis}[%
width=2.911in,
height=2.051in,
at={(0in,0in)},
scale only axis,
xmin=0,
xmax=1,
ymin=0,
ymax=1,
axis line style={draw=none},
ticks=none,
axis x line*=bottom,
axis y line*=left,
legend style={legend cell align=left, align=left, draw=white!15!black}
]
\end{axis}
\end{tikzpicture}%
\caption{\small{First moment of the conditional mean PAoI $M^1(\bar{A}_k)$ for different number of LEOSATs $N_s$ at a LEOSAT altitude $a_s =1500$ km.}}
\label{Fig:AoI_Different_diff_Ns_vs_Arrival_Rates_a_s_1500}
\vspace{0.2in}
\end{figure}

\begin{table}[h]\caption{Simulation Parameters}
\centering
\begin{center}\resizebox{\columnwidth}{!}{%
    \begin{tabular}{| c |c | c |}
    \hline
    \textbf{Notation}               & \textbf{Parameter}                                    & \textbf{Value} \\ \hline
    $r_e$                           & The Earth's radius                                    & 6371 km\\ \hline
    $\theta_b$                      & The beamwidth angle                                   & $\pi/4$\\ \hline
    $p_t^{IoT}$                     & The transmitted power of the IoT node                 & 23 dBm \\ \hline
    $\sigma_S^2$                    & The noise power at the LEOSAT                         & -174 dBm\\ \hline
    $f_c$                           & The frequency spectrum of the transmitted signal      & 2 GHz\\ \hline
    $B$                           & The channel bandwidth      & 10 kHz\\ \hline
    $\zeta$    & Channel gap factor & 0.8 \\ \hline
    $T_s$    & Time slot duration & 1 ms \\ \hline
    $c$                             & The speed of light                                    & $3 \times10^8$ m/s \\ \hline
    $\gamma_{th}$                   & The ${\rm{SINR}}$ threshold                           & $-20$ dB \\ \hline
    $\mathcal{SR}(\Omega,b_0,m)$    & Shadowed Rician Fading: light                         & $\mathcal{SR}(1.29,0.158,19.4)$ \\ \hline
    $\lambda_a$                     & Arrival rate of the updates                           & $0.5$ \\ \hline  
    \end{tabular}%
    }
\end{center}
\label{tab:Simulation_Parameters}
\vspace{-4mm}
\end{table}

We validate the analysis by conducting Monte Carlo simulations across $10^3$ different network realizations over $10^4$ time iterations, using the simulation parameter values specified in Table \ref{tab:Simulation_Parameters}. 

In Figs. \ref{Fig:P_Cov_Meta_Distribution_Ns_900} and \ref{Fig:P_Cov_Meta_Distribution_Ns_1300}, we plot the meta distribution of the conditional coverage probability at the typical LEOSAT at distinct LEOSAT altitudes $a_s$ (from 500 to 1500 km) for $N_s = 900$ and $N_s = 1300$, respectively. As shown in Fig. \ref{Fig:P_Cov_Meta_Distribution_Ns_900} and Fig. \ref{Fig:P_Cov_Meta_Distribution_Ns_1300}, increasing the LEOSAT altitude $a_s$ reduces the percentile of the update transmissions that exceed a certain reliability threshold $\tau$. This behavior can be justified as follows. As shown in \eqref{Eq:A_Cap}, the area of the spherical cap \(|{\mathcal A_{\rm Cap}}|\) is directly proportional to the altitude of the LEOSAT $a_s$. Accordingly, increasing $a_s$ guarantees involving a larger number of LEOSATs inside the spherical cap, leading to an increase in the number of interferes, which reduces the coverage probability at the typical LEOSAT, thereby minimizing the overall reliability of the network. Additionally, increasing the number of LEOSATs from $N_s = 900$ to $N_s = 1300$ leads to a decrease in the number of update transmissions across distinct network realizations that exceeds a certain reliability threshold $\tau$. This can be observed by comparing Fig. \ref{Fig:P_Cov_Meta_Distribution_Ns_900} with Fig. \ref{Fig:P_Cov_Meta_Distribution_Ns_1300}. As demonstrated in Figs. \ref{Fig:P_Cov_Meta_Distribution_Ns_900}, \ref{Fig:P_Cov_Meta_Distribution_Ns_1300}, \ref{Fig:PAoI_Distribution_Ns_900}, and \ref{Fig:PAoI_Distribution_Ns_1300}, there is a minor gap between the analytical results and the simulation results for the meta-distributions of both coverage probability and the spatial distribution of the conditional mean PAoI. This gap can be explained by two main assumptions made in this study: a Gamma distribution is used to model the shadowed Rician fading, while a Beta distribution is employed to calculate the coverage probability distribution. Furthermore, this gap is minimal when considering the lower altitudes of LEOSAT, and it is also small at higher LEOSAT altitudes. This emphasizes how well these assumptions align with the exact values.\\

Figs. \ref{Fig:PAoI_Distribution_Ns_900} and \ref{Fig:PAoI_Distribution_Ns_1300} show the spatial distribution of the conditional mean PAoI at the typical LEOSAT at distinct LEOSAT altitudes $a_s$ (from 500 to 1500 km) for the total number of LEOSATs $N_s = 900$ and $N_s = 1300$, respectively. As illustrated in both figures, averaging across different network realizations, the probability of achieving a certain mean PAoI decreases with increasing LEOSAT altitudes $a_s$. In addition to the increase in the number of potential interferers due to the higher LEOSAT altitudes $a_s$ as we mentioned, the propagation delay increases, which leads to a maximization in the mean PAoI of the transmitted update at the typical LEOSAT. On the other hand, increasing the total number of LEOSATs minimizes the probability of achieving a certain mean PAoI at the same LEOSAT altitude $a_s$, which is shown by comparing Fig. \ref{Fig:PAoI_Distribution_Ns_900} with Fig. \ref{Fig:PAoI_Distribution_Ns_1300}. This is because an increase in the total number of LEOSATs leads to an increase in the number of potential interferers, which subsequently increases the mean PAoI.

In Figs. \ref{Fig:Avg_PAoI_vs_h} and \ref{Fig:Avg_PAoI_vs_Ns}, we plot the first moment of the conditional mean PAoI for distinct numbers of LEOSATs $N_s$ (from 500 to 2000 LEOSATs) at different LEOSAT altitudes $a_s$ (from 500 to 1500 km). As illustrated, increasing the total number of LEOSATs at a higher LEOSAT altitude increases the spatial average of the conditional mean PAoI. This observation aligns with our earlier figures, which show that increased numbers of satellites and higher altitudes lead to an increase in the number of potential interferers, consequently reducing the coverage probability and, in turn, the mean of the PAoI.

Moreover, Fig. \ref{Fig:Avg_PAoI_vs_h} demonstrates an interesting trend for the number of  LEOSATs $N_s = 500$ and $N_s = 700$ (represented by the black and orange curves). It illustrates that the spatial average of the conditional mean PAoI can be minimized by considering a higher LEOSAT altitude. That means the vacation probability $P_{v}$ defined in \eqref{Eq:vacation_prob} has the dominant effect on the mean PAoI in these cases. This leads to an increase in the LEOSAT altitude, which guarantees the LEOSATs' availability to establish a connection to the intended node. Consequently, reducing the waiting time to begin serving the update and decreases the spatial average of the conditional mean PAoI.
Furthermore, for the case of $N_s = 900$ (depicted by the green curve), it is worth noting that there is an optimal LEOSAT altitude $a_s = 700$ km that gives the minimum spatial average of the conditional mean PAoI. Additionally, considering the LEOSAT altitude $a_s = 500$ km, it is beneficial to consider a higher number of LEOSATs to reduce the spatial average of the conditional mean PAoI, as shown by the black curve in Fig. \ref{Fig:Avg_PAoI_vs_Ns}. This behavior is justified as follows. In the lower altitudes, the size of the spherical cap is smaller because they are directly proportional. Therefore, increasing the number of LEOSATs increases the chance of including a larger number of LEOSATs within the spherical cap, which subsequently reduces the waiting time to begin service. 

In Figs. \ref{Fig:AoI_vs_Different_Theta_b} and \ref{Fig:AoI_vs_Different_Theta_b_as_1500}, we illustrate the impact of different LEOSAT beamwidth $\theta_b = [\pi/10, \pi/8, \pi/6, \pi/4]$ radian on the first moment of the conditional mean PAoI with a distinct number of LEOSATs $N_s = [500, 1000, 1500, 2000]$ at two different values of the LEOSAT altitude $a_s=500$ km and $a_s=1500$ km, respectively. As shown in Fig. \ref{Fig:AoI_vs_Different_Theta_b}, using a narrower beamwidth does not reduce the average of the conditional mean PAoI at a low LEOSAT altitude of $a_s=500$ km. On the other hand, at higher LEOSAT altitudes $a_s=1500$ km, applying a narrower beamwidth can effectively minimize the average of the conditional mean PAoI, as illustrated in Fig. \ref{Fig:AoI_vs_Different_Theta_b_as_1500}. The opposite trends in Fig. \ref{Fig:AoI_vs_Different_Theta_b} and Fig. \ref{Fig:AoI_vs_Different_Theta_b_as_1500} can be justified as follows. As shown in \eqref{Eq:A_Cap}, the area of the spherical cap \(|{\mathcal A_{\rm Cap}}|\) is directly proportional to the altitude of the LEOSAT $a_s$. Hence, for a lower LEOSAT altitude $a_s=500$ km, \(|{\mathcal A_{\rm Cap}}|\) involves a lower number of LEOSATs, which refers to a lower number of interferers. Therefore, increasing the LEOSAT footprint by considering a wider beamwidth increases the probability of establishing a connection between the typical LEOSAT and the intended source node. This results in a shorter waiting time to begin serving the current update and consequently minimizes the average of the conditional mean PAoI. On the other hand, for a LEOSAT altitude $a_s=1500$ km, the area of the spherical cap \(|{\mathcal A_{\rm Cap}}|\) expands to involve a larger number of LEOSATs that increases the number of potential interferers, which reduces the coverage probability at the typical LEOSAT. As a result, the service time for the current update increases, leading to a higher average of the conditional mean PAoI. Consequently, applying a narrower LEOSAT beamwidth at a higher LEOSAT altitude is beneficial to limit the number of interferers and minimize the average of the conditional mean PAoI. It is important to note that, as shown in Fig. \ref{Fig:AoI_vs_Different_Theta_b_as_1500}, for the scenario involving 500 LEOSATs at an altitude of 1500 km, the mean PAoI remains constant once the LEOSAT beamwidth reaches $\theta_b=30$ degrees. Therefore, setting the LEOSAT beamwidth to $\theta_b=45$ degrees is the ideal design point to maintain the antenna hardware cost without compromising data freshness. On the other hand, as the constellation density increases (with 1000, 1500, and 2000 LEOSATs), interference becomes a significant issue. In this context, wider LEOSAT beamwidths lead to longer service times because they capture more interfering signals. Consequently, denser constellations require narrower LEOSAT beamwidths to manage interference effectively and optimize data freshness. This results in distinct optimal LEOSAT beamwidths: $\theta_b=30$ degrees for 1000 LEOSATs, $\theta_b=22.5$ degrees for 1500 LEOSATs, and $\theta_b=18$ degrees for 2000 LEOSATs.

In Fig. \ref{Fig:AoI_Different_diff_Ns_vs_Arrival_Rates_a_s_1500}, we illustrate the impact of different arrival rates $\lambda_a = [0.1, 0.9]$ on the average of the conditional mean PAoI with a distinct number of LEOSATs $N_s = [500, 1000, 1500, 2000]$ at a typical LEOSAT altitude $a_s=1500$ km. As demonstrated in the figure, an increase in the arrival rate reduces the average of the conditional mean PAoI, as the source node continually transmits more recent updates.

\section{Conclusion} \label{Sec:Conclusion}
 
In this work, we derive the spatial distribution of the conditional mean PAoI for updates transmitted from a ground IoT node to its serving LEOSAT. Using tools from stochastic geometry, we begin our analysis by calculating the b-th moment of the conditional uplink coverage probability, which is essential for formulating the conditional coverage probability meta-distribution. From these expressions, we not only derive the spatial distribution of the conditional mean PAoI but also formulate the b-th moment expression for the conditional mean PAoI. We demonstrate how the varying number of LEOSATs and the altitude of the LEOSATs influence the first moment of the conditional mean PAoI, the conditional coverage probability meta-distribution, and the spatial distribution of the conditional mean PAoI. Moreover, to propose a deeper understanding of system performance, we evaluate distinct system setups by varying the total number of LEOSATs alongside their altitudes. Analyzing these diverse scenarios provides powerful insights into how the LEOSAT beamwidth impacts the conditional mean PAoI. Consequently, these findings enable system designers to precisely tune and identify the ideal beamwidth for LEOSATs, ensuring optimal performance that minimizes the conditional mean PAoI.

\appendices  
\section{Proof of Theorem~\ref{thm:bth_Moment_of_Conditional_Coverage_Probability}}\label{app:bth_Uplink_Pcov}

 \begin{align}\label{Eq:bth_Uplink_Pcov} 
	 & M^b(\gamma_{\rm th}|\psi_n) = \mathbb{E}_{\Psi_{n},|H|^2}[({{\rm{SINR}} \rm} >\gamma_{\rm th}| \Psi_{n})^b] \nonumber \\
    &= \mathbb{E}_{\Psi_{n},|H_o|^2}\left(\dfrac{ {\rho_t^{\rm IoT}} \left(\dfrac{\lambda_c}{4 \pi r_o}\right)^2 |H_o|^2}{ {\mathcal I_{\rm agg}^{\rm U}}+\sigma_S^2} >\gamma_{\rm th} | \Psi_{n}\right) \nonumber \\
    &\stackrel{(a)}{=} \mathbb{E}_{\Psi_{n}} \left[\mathbb{E}_{|H_o|^2}\left[ \left(|H_o|^2 > \dfrac{\gamma_{\rm th}({\mathcal I_{\rm agg}^{\rm U}}+\sigma_S^2) r_o^2}{{\rho_t^{\rm IoT}} \nu} | \Psi_{n} \right)^b \right] \right] \nonumber \\
    &\stackrel{(b)}{\approx} \mathbb{E}_{\Psi_{n}} \left[ \left(1- \left(1-\exp\left(\dfrac{-\mu_1 \gamma_{\rm th}({\mathcal I_{\rm agg}^{\rm U}}+\sigma_S^2) r_o^2  }{m_h{\rho_t^{\rm IoT}} \nu}\right)\right)^k \right)^b  \right] \nonumber \\
    &\stackrel{(c)}{=} \mathbb{E}_{\Psi_{n}} \biggl[ \biggl(\sum_{j=1}^k \binom{k}{j}(-1)^{j+1} \exp\biggl(\dfrac{-j\mu_1 \gamma_{\rm th}({\mathcal I_{\rm agg}^{\rm U}}+\sigma_S^2) r_o^2  }{m_h{\rho_t^{\rm IoT}} \nu}\biggr)\biggr)^b  \biggr] \nonumber \\
    &\stackrel{(d)}{=} \mathbb{E}_{\Psi_{n}} \biggl[ \sum_{\substack{b_1+b_2+\cdots +b_k=b \\ b_1,b_2,\cdots,b_k \geq 0}} \binom{b}{b_1,b_2,\cdots, b_k} \prod_{j=1}^k \biggl(\binom{k}{j} (-1)^{j+1} \times \nonumber \\
    &\exp\left(\dfrac{-j\mu_1 \gamma_{\rm th}{\mathcal I_{\rm agg}^{\rm U}} r_o^2}{m_h{\rho_t^{\rm IoT}} \nu}\right)\exp\biggl(\dfrac{-j\mu_1 \gamma_{\rm th}\sigma_S^2 r_o^2  }{m_h{\rho_t^{\rm IoT}} \nu}\biggr)\biggr) ^{b_j} \biggr] \nonumber \\
    &\stackrel{(e)}{=} \mathbb{E}_{r_o} \biggl[  \sum_{\substack{b_1+b_2+\cdots +b_k=b \\ b_1,b_2,\cdots,b_k \geq 0}} \binom{b}{b_1,b_2,\cdots, b_k} \times  \nonumber \\
    &\prod_{j=1}^k \biggl(\binom{k}{j} (-1)^{j+1} \biggr)^{b_j} 
     \exp\biggl(\dfrac{-j\mu_1 \gamma_{\rm th}\sigma_S^2 r_o^2 \sum_{j=1}^k j b_j}{m_h{\rho_t^{\rm IoT}} \nu}\biggr) \times \nonumber \\
    &\prod_{j=1}^k \bigg(\mathbb{E}_{{\mathcal I_{\rm agg}^{\rm U}}} \biggl[\exp\left(\dfrac{-j\mu_1 \gamma_{\rm th}{\mathcal I_{\rm agg}^{\rm U}} r_o^2}{m_h{\rho_t^{\rm IoT}} \nu}\right) \biggr]\biggr)^{b_j}\biggr] \nonumber \\
    &\stackrel{(f)}{=} \int_{a_s}^{ r_{\rm max}} \biggl[  \sum_{\substack{b_1+b_2+\cdots +b_k=b \\ b_1,b_2,\cdots,b_k \geq 0}} \binom{b}{b_1,b_2,\cdots, b_k} \times  \nonumber \\
    &\prod_{j=1}^k \biggl(\binom{k}{j} (-1)^{j+1} \biggr)^{b_j} 
    \exp\biggl(\dfrac{-j\mu_1 \gamma_{\rm th}\sigma_S^2 r_o^2 \sum_{j=1}^k j b_j}{m_h{\rho_t^{\rm IoT}} \nu}\biggr) \times \nonumber \\
    &\prod_{j=1}^k \biggl({\mathcal{L}_{\mathcal I_{\rm agg}^{\rm U}}} \biggl(\dfrac{-j\mu_1 \gamma_{\rm th}{\mathcal I_{\rm agg}^{\rm U}} r_o^2}{m_h{\rho_t^{\rm IoT}} \nu}\biggr)\biggr)^{b_j}\biggr] f_{R_o}(r_o) dr_o,
\end{align}
where (a) follows the independence of locations and channel power gains, where $\nu=(\lambda_c/4 \pi)^2$, (b) follows from approximating $|H_o|^2$ as a gamma distributed random variable \cite{Gamma_Bounds,Alzenad}  with shape parameter $k=\frac{m(2b_0+\Omega)^2}{4b_0^2m+4b_0m\Omega+\Omega^2}$ and scale parameter $m_h = \frac{4b_0^2m+4b_0m\Omega+\Omega^2}{m(2b_0+\Omega)^2}$, and $\mu_1$ is the upper bound parameter of the tight bounds for the Gamma function in \eqref{Eq:tight_bounds}, (c)  is the binomial theorem expansion of (b). In (d), we use the multinomial theorem with coefficients $b_1,b_2,\cdots,b_k \geq 0$, and $b_1+b_2+\cdots +b_k=b$ based on the shape parameter $k$, (e) follows the independence of the distances from the intended source node $r_o$ and the interferer nodes $r_i$ to the typical LEOSAT. Finally, (f) follows the definition of the Laplace transform as, ${\mathcal{L}_{\mathcal I_{\rm agg}^{\rm U}}} = E_{\mathcal I_{\rm agg}^{\rm U}}[e^{\rm -s{\mathcal I_{\rm agg}^{\rm U}}}]$, where $s=\dfrac{j\mu_1\gamma_{\rm th} r_o^2}{m_h{\rho_t^{\rm IoT}} \nu}$ which completes the proof. 

\begin{align}
    \label{Eq:tight_bounds} 
    (1-\exp({\dfrac{-\mu_1 x}{m_h}}))^k  \leq {F_{|H|^2}}(x) \leq (1-\exp({\dfrac{-\mu_2 x}{m_h}}))^k, \\
    \mu_1 =
        \begin{cases} 
            1 & k > 1 \\
            (k!)^{\dfrac{-1}{k}} & k < 1  \nonumber
        \end{cases}  \quad
    \mu_2 =
        \begin{cases} 
            (k!)^{\dfrac{-1}{k}}  & k > 1 \\
            1 & k < 1 
        \end{cases} 
\end{align}
    
\section{Proof of Lemma~\ref{lemma:Uplink_Iagg}}\label{app:Uplink_Iagg}
    \begin{align}\label{app:Uplink_Laplace_teanform_Iagg}
        &{\mathcal{L}_{\mathcal I_{\rm agg}^{\rm U}}(s)} = E_{\mathcal I_{\rm agg}^{\rm U}}\biggl[e^{-s{\mathcal I_{\rm agg}^{\rm U}}}\biggr] \nonumber \\
        &\stackrel{(a)}{=} E_{\Psi_n} \biggl[\prod_{i\in\{\Psi_n \cap \text{A(footprint) \rm}\}/o} E_{|H|^2}\biggl[\exp\biggl(-s{\rho_t^{\rm IoT}} 
        \left(\dfrac{\nu}{ r_i^2}\right) \times \nonumber \\
        &|H_i|^2\biggr)\biggr]\biggr] \nonumber \\
        &\stackrel{(b)}= E_{\Psi_n} \biggl[\left(1+\dfrac{s{\rho_t^{\rm IoT}}\nu m_h}{r_i^2}\right)^{-k} \mathbbm{1}(a_s, {\rm  r_{\rm max}}(\theta_c)) \nonumber \\
        &+ \mathbbm{1}({ \rm r_{\rm max}}(\theta_c),(r_s+r_e))\biggr]^{N_s-1} \nonumber \\
        &= \biggl[\int_{a_s}^{\rm {r_{\rm max}}} \left(1+\left(\dfrac{s{\rho_t^{\rm IoT}}\nu m_h}{r_i^2}\right)\right)^{-k} f_{R_i}(r_i) dr_i \nonumber \\
        &+ \int_{\rm {r_{\rm max}}}^{(r_s+r_e)} f_{R_i}(r_i) dr_i\biggr]^{N_s-1},
    \end{align}
where (a) applies the exponential property, which allows us to replace the exponential of a sum with the product of exponential and follows the independence of locations and channel power gains, then using the exponential property, (b) considers the Moment Generating Function (MGF) of Gamma distribution with shape $k$ and scale $m_h$ which is defined as $M_{|H|^2}(t) = E[e^{t|H|^2}] = (1-m_h t)^{-k}$, and then splits the interference area into two regions: if the active IoT node lies within the bounded area from $a_s$ to ${ r_{\rm max}}$, it is considered a potential interferer; otherwise, it is not an interferer node. Please refer to \cite{Badiaa_LEO1} for a more detailed version of this derivation.

\section{Proof of Theorem~\ref{thm:bth_Moment_of_Conditional_Mean_PAoI}}\label{app:bth_Moment_of_Conditional_Mean_PAoI}

 \begin{align}\label{Eq:bth_Uplink_Conditional_Mean_PAoI} 
	 &M^b(\bar{A}_k) = \mathbb{E}[((T_{\rm k-1} +T_k + T^I_k + T^{\rm U}_P) |\Psi_n)^b] \nonumber \\
     &\stackrel{(a)}{=} \mathbb{E}_{\Psi_{n}} \left[ \sum_{x=0}^b \binom{b}{x} W^x Z^{b-x}\right] \nonumber \\
     &=  \sum_{x=0}^b \binom{b}{x} \left( \frac{2}{(1-P_{v})} + \frac{1}{\lambda_a}\right)^{b-x} \mathbb{E}_{\Psi_{n}} \biggl[ \biggl(\dfrac{2}{P^{\rm U}_{\rm cov}(\gamma_{\rm th}|\Psi_n)}   \nonumber \\
     &+ \dfrac{1}{c}\int_{a_s}^{ {\rm r_{\rm max}}} r_o f_{R_o}(r_o) dr_o\biggr)^x\biggr] \nonumber \\
    &\stackrel{(b)}{=} \sum_{x=0}^b \binom{b}{x} \left( \frac{2}{(1-P_{v})} + \frac{1}{\lambda_a}\right)^{b-x} \mathbb{E}_{\Psi_{n}} \biggl[ \sum_{y=0}^x \binom{x}{y} \times \nonumber \\
    & \biggl(\dfrac{2}{P^{\rm U}_{\rm cov}(\gamma_{\rm th}|\Psi_n)}\biggr)^x \biggl(\dfrac{1}{c}\int_{a_s}^{ {\rm r_{\rm max}}} r_o f_{R_o}(r_o) dr_o  \biggr)^{x-y}\biggr] \nonumber \\
    &\stackrel{(c)}{=} \sum_{x=0}^b \binom{b}{x} \left( \frac{2}{(1-P_{v})} + \frac{1}{\lambda_a}\right)^{b-x} \biggl[ \sum_{y=0}^x \binom{x}{y}  \times \nonumber \\
    &\biggl(\dfrac{2}{M^b(P^{\rm U}_{\rm cov}(\gamma_{\rm th}| \Psi_{n}))}\biggr)^x  \biggl(\dfrac{1}{c}\int_{a_s}^{ {\rm r_{\rm max}}} r_o f_{R_o}(r_o) dr_o \biggr)^{x-y}\biggr]
\end{align}
where (a) follows the binomial theorem expansion, where $W = \dfrac{2}{P^{\rm U}_{\rm cov}(\gamma_{\rm th})} + \dfrac{1}{c}\int_{a_s}^{ {\rm r_{\rm max}}} r_o f_{R_o}(r_o) dr_o$ and $Z = \dfrac{2}{(1-P_{v})} + \dfrac{1}{\lambda_a}$, (b) applies the linearity of expectation and the binomial theorem expansion, (c) is the expectation for a given network setup across different network realizations and $M^b(P^{\rm U}_{\rm cov}(\gamma_{\rm th}| \Psi_{n}))$ is given in \eqref{Eq:Uplink_Coverage_Probability_bth_Moments_Final}.

\section{Proof of Theorem~\ref{thm:Spatial_Distribution_of_Conditional_Mean_PAoI}}\label{app:Spatial_Distribution_of_Conditional_Mean_PAoI}

 \begin{align}\label{Eq:bth_Uplink_Conditional_Mean_PAoI} 
	 &F_{\bar{A}_k}(\gamma_{\rm th},t) = \mathbb{P}(A_k(\gamma_{\rm th}|\Psi_n) < t) \nonumber \\
     &= \mathbb{P}\biggl(\biggl(\dfrac{2}{P^{\rm U}_{\rm cov}(\gamma_{\rm th}|\Psi_n)} + \dfrac{2}{(1-P_{v})} + \dfrac{1}{\lambda_a}\nonumber \\
      &+ \dfrac{1}{c}\int_{a_s}^{ {\rm r_{\rm max}}} r_o f_{R_o}(r_o) dr_o \biggr)< t \biggr) \nonumber \\
      &= \mathbb{P}\biggl(P^{\rm U}_{\rm cov}(\gamma_{\rm th}|\Psi_n) > 2 \biggl( t - \dfrac{2}{(1-P_{v})} - \dfrac{1}{\lambda_a}\nonumber \\
      &- \dfrac{1}{c}\int_{a_s}^{ {\rm r_{\rm max}}} r_o f_{R_o}(r_o) dr_o \biggr)^{-1}\biggr) \nonumber \\
     &\stackrel{(a)}{=} 1 - \mathcal{I}_\omega \left(\alpha,\beta\right),
\end{align}
where (a) follows the Beta distribution approximation for the coverage probability distribution, $\mathcal{I}_\omega (\alpha,\beta)$ is the regularized incomplete beta function, $\alpha$ and $\beta$ are defined in \eqref{Eq:Beta}, and $\omega = 2\left(t-\dfrac{2}{(1-P_{v})} + \dfrac{1}{\lambda_a}+ \dfrac{1}{c}\int_{a_s}^{ {\rm r_{\rm max}}} r_o f_{R_o}(r_o) dr_o\right)^{-1}$.
\bibliographystyle{IEEEtran}
\bibliography{ref}

@article{AoI1,
  title={{Real-time status: How often should one update?}},
  author={Sanjit Krishnan Kaul and Roy D. Yates and Marco Gruteser},
  journal={  Proc. IEEE INFOCOM},
  year={2012},
  pages={2731-2735}
}

@ARTICLE{AoI8,
  author={Costa, Maice and Codreanu, Marian and Ephremides, Anthony},
  journal={IEEE Transactions on Information Theory}, 
  title={{On the Age of Information in Status Update Systems With Packet Management}}, 
  year={2016},
  volume={62},
  number={4},
  pages={1897-1910},
  doi={10.1109/TIT.2016.2533395}}

@INPROCEEDINGS{Howard,
  author={Lu, Yanwu and Yang, Howard H. and Pappas, Nikolaos and Geraci, Giovanni and Ma, Chuan and Quek, Tony Q. S.},
  booktitle={ Proc. IEEE GC Wkshps}, 
  title={{Analysis of Age of Information in Non-terrestrial Networks}}, 
  year={2023},
  volume={},
  number={},
  pages={239-244},
  doi={10.1109/GCWkshps58843.2023.10465028}}

@INPROCEEDINGS{AoI18,
  author={Guo, Zheng and Chen, Qian and Meng, Weixiao},
  booktitle={  Proc. IEEE GLOBECOM }, 
  title={Service-Oriented {AoI} Modeling and Analysis for Non-Terrestrial Networks}, 
  year={2024},
  volume={},
  number={},
  pages={4316-4321},
  doi={10.1109/GLOBECOM52923.2024.10901619}}

@ARTICLE{Anna1,
  author={Talgat, Anna and Kishk, Mustafa A. and Alouini, Mohamed-Slim},
  journal={IEEE Transactions on Aerospace and Electronic Systems}, 
  title={{Stochastic Geometry-Based Uplink Performance Analysis of {IoT} over {LEO} Satellite Communication}}, 
  year={2024},
  volume={60},
  number={4},
  pages={4198-4213},
  doi={10.1109/TAES.2024.3374718}}

@article{Akram1,
  title={{An Analytic Approach for Modeling The Coverage Performance of Dense Satellite Networks}},
  author={Al-Hourani, Akram},
  journal={IEEE Wireless Communications Letters},
  volume={10},
  number={4},
  pages={897--901},
  year={2021},
  publisher={IEEE}
}

@ARTICLE{Niloofar2,
  author={Okati, Niloofar and Riihonen, Taneli},
  journal={IEEE Transactions on Communications}, 
  title={{Nonhomogeneous Stochastic Geometry Analysis of Massive {LEO} Communication Constellations}}, 
  year={2022},
  volume={70},
  number={3},
  pages={1848-1860},
  doi={10.1109/TCOMM.2022.3143131}}

@ARTICLE{Namyoon1,
  author={Park, Jeonghun and Choi, Jinseok and Lee, Namyoon},
  journal={IEEE Transactions on Wireless Communications}, 
  title={{A Tractable Approach to Coverage Analysis in Downlink Satellite Networks}}, 
  year={2023},
  volume={22},
  number={2},
  pages={793-807},
  doi={10.1109/TWC.2022.3198103}}

@ARTICLE{Choi1,
  author={Choi, Chang-Sik and Baccelli, François},
  journal={IEEE Transactions on Vehicular Technology}, 
  title={{Cox Point Processes for Multi Altitude {LEO} Satellite Networks}}, 
  year={2024},
  volume={73},
  number={10},
  pages={15916-15921},
  doi={10.1109/TVT.2024.3414833}}

@ARTICLE{Choi2,
  author={Choi, Chang-Sik and Baccelli, François},
  journal={IEEE Transactions on Communications}, 
  title={{A Novel Analytical Model for {LEO} and MEO Satellite Networks Based on Cox Point Processes}}, 
  year={2024},
  volume={},
  number={},
  pages={1-1},
  doi={10.1109/TCOMM.2024.3471978}}

@ARTICLE{Akram2,
  author={Homssi, Bassel Al and Al-Hourani, Akram},
  journal={IEEE Communications Letters}, 
  title={{Modeling Uplink Coverage Performance in Hybrid Satellite-Terrestrial Networks}}, 
  year={2021},
  volume={25},
  number={10},
  pages={3239-3243},
  doi={10.1109/LCOMM.2021.3103942}}

@ARTICLE{Badiaa_LEO1,
  author={Gabr, Badiaa and Kishk, Mustafa A.},
  journal={IEEE Transactions on Wireless Communications}, 
  title={{Performance Analysis of Average Peak Age of Information in LEO Satellite-Enabled IoT Networks}}, 
  year={2026},
  volume={25},
  number={},
  pages={5186-5199},
  doi={10.1109/TWC.2025.3616774}}

@ARTICLE{Jeffrey,
  author={Andrews, Jeffrey G. and Baccelli, Francois and Ganti, Radha Krishna},
  journal={IEEE Transactions on Communications}, 
  title={{A Tractable Approach to Coverage and Rate in Cellular Networks}}, 
  year={2011},
  volume={59},
  number={11},
  pages={3122-3134},
  doi={10.1109/TCOMM.2011.100411.100541}}

@book{Haenggi2012,
  title={{Stochastic Geometry for Wireless Networks}},
  author={Martin Haenggi},
  year={2012},
  publisher = {Cambridge University Press}
}

@ARTICLE{Alouini_1,
  author={Abdi, A. and Lau, W.C. and Alouini, M.-S. and Kaveh, M.},
  journal={IEEE Transactions on Wireless Communications}, 
  title={{A New Simple Model for Land Mobile Satellite Channels: First- And Second-Order Statistics}}, 
  year={2003},
  volume={2},
  number={3},
  pages={519-528},
  doi={10.1109/TWC.2003.811182}}

@book{Table_of_integrals,
  title={Table of integrals, series, and products},
  author={Gradshteyn, Izrail Solomonovich and Ryzhik, Iosif Moiseevich},
  year={2014},
  publisher={New York, NY, USA: Academic}
}

@ARTICLE{Alzenad,
  author={Alzenad, Mohamed and Yanikomeroglu, Halim},
  journal={IEEE Transactions on Wireless Communications}, 
  title={{Coverage and Rate Analysis for Vertical Heterogeneous Networks (VHetNets)}}, 
  year={2019},
  volume={18},
  number={12},
  pages={5643-5657},
  doi={10.1109/TWC.2019.2938168}}

@article{Gamma_Bounds,
author = {Alzer, Horst},
title = {{On Some Inequalities for The Incomplete Gamma Function}},
year = {1997},
issue_date = {April 1997},
publisher = {American Mathematical Society},
address = {USA},
volume = {66},
number = {218},
issn = {0025-5718},
doi = {10.1090/S0025-5718-97-00814-4},
journal = {Math. Comput.},
month = {apr},
pages = {771–778},
numpages = {8}
}

@INPROCEEDINGS{AoI_LEO_01,
  author={Lin, Chih-Yu and Liao, Wanjiun},
  booktitle={ Proc. IEEE VTC2024-Spring}, 
  title={Energy-aware Age of Information {(AoI)} Minimization for Internet of Things in {NOMA}-based {LEO} Satellite Networks}, 
  year={},
  volume={},
  number={},
  pages={1-5},
  doi={10.1109/VTC2024-Spring62846.2024.10683037}}

@ARTICLE{Anna4,
  author={Talgat, Anna and Kishk, Mustafa A. and Alouini, Mohamed-Slim},
  journal={IEEE Communications Letters}, 
  title={Stochastic Geometry-Based Analysis of {LEO} Satellite Communication Systems}, 
  year={2021},
  volume={25},
  number={8},
  pages={2458-2462},
  doi={10.1109/LCOMM.2020.3029808}}

@ARTICLE{AoI_23,
  author={Zhang, Di and Peng, Yuanyuan and Ke, Ying and Song, Jinpeng and Wang, Shuai and Pan, Gaofeng and An, Jianping},
  journal={IEEE Transactions on Vehicular Technology}, 
  title={Age of Information Variation of {LEO} Satellite-Terrestrial Uplink Transmissions}, 
  year={2025},
  volume={74},
  number={6},
  pages={9645-9655},
  doi={10.1109/TVT.2025.3540572}}

@ARTICLE{Meta_1,
  author={Angervuori, Ilari and Haenggi, Martin and Wichman, Risto},
  journal={IEEE Transactions on Communications}, 
  title={Meta Distribution of the {SIR} in a Narrow-Beam {LEO} Uplink}, 
  year={2025},
  volume={73},
  number={9},
  pages={8260-8273},
  doi={10.1109/TCOMM.2025.3547734}}

@ARTICLE{Satellite4,
  author={Zhang, Jin and Xu, Lexi and Wei, Haichao and Deng, Na and Zhao, Nan and Nallanathan, Arumugam},
  journal={IEEE Transactions on Wireless Communications}, 
  title={Local Delay in {LEO} Satellite Mega-Constellations}, 
  year={2025},
  volume={},
  number={},
  pages={1-1},
  doi={10.1109/TWC.2025.3616286}}

@ARTICLE{Meta_2,
  author={Sun, Yanshi and Ding, Zhiguo},
  journal={IEEE Networking Letters}, 
  title={A Fine Grained Stochastic Geometry-Based Analysis on {LEO} Satellite Communication Systems}, 
  year={2023},
  volume={5},
  number={4},
  pages={237-240},
  doi={10.1109/LNET.2023.3269818}}

@INPROCEEDINGS{PPP,
  author={Sellathurai, Mathini and Vuppala, Satyanarayana and Ratnarajah, Tharmalingam},
  booktitle={ Proc. IEEE Asilomar}, 
  title={User selection for multi-beam satellite channels: A stochastic geometry perspective}, 
  year={2016},
  volume={},
  number={},
  pages={487-491},
  doi={10.1109/ACSSC.2016.7869087}}

@ARTICLE{Ruibo3,
  author={Wang, Ruibo and Kishk, Mustafa A. and Alouini, Mohamed-Slim},
  journal={IEEE Transactions on Aerospace and Electronic Systems}, 
  title={Stochastic Geometry-Based Low Latency Routing in Massive {LEO} Satellite Networks}, 
  year={2022},
  volume={58},
  number={5},
  pages={3881-3894},
  doi={10.1109/TAES.2022.3199682}}

@ARTICLE{Satellite5,
  author={Dong, Wen-Yu and Yang, Shaoshi and Zhang, Ping and Chen, Sheng},
  journal={IEEE Internet of Things Journal}, 
  title={Modeling and Performance Analysis of {IoT}-Over-{LEO} Satellite Systems Under Realistic Operational Constraints: A Stochastic Geometry Approach}, 
  year={2025},
  volume={12},
  number={15},
  pages={30576-30593},
  doi={10.1109/JIOT.2025.3570843}}

@ARTICLE{Satellite6,
  author={Li, Yue and Xu, Yunlai and Zhang, Qinyu and Yang, Zhihua},
  journal={IEEE Internet of Things Journal}, 
  title={Age-Optimized Multihop {Information} Update Mechanism on the {LEO} Satellite Constellation via Continuous Time-Varying Graphs}, 
  year={2023},
  volume={10},
  number={8},
  pages={7189-7203},
  doi={10.1109/JIOT.2022.3229028}}

@ARTICLE{Akram3,
  author={Al Homssi, Bassel and Al-Hourani, Akram},
  journal={IEEE Wireless Communications Letters}, 
  title={Optimal Beamwidth and Altitude for Maximal Uplink Coverage in Satellite Networks}, 
  year={2022},
  volume={11},
  number={4},
  pages={771-775},
  doi={10.1109/LWC.2022.3143498}}

@ARTICLE{Tony_Quek,
  author={Wei, Qing and Shi, Jia and Li, Zan and Zhao, Zhongling and Quek, Tony Q.S.},
  journal={IEEE Transactions on Vehicular Technology}, 
  title={{AoI} and Energy Minimization for {LEO} Satellite-Terrestrial Networks: A Constrained Multi-Objective Optimization Approach}, 
  year={2025},
  volume={74},
  number={8},
  pages={12436-12448},
  doi={10.1109/TVT.2025.3553499}}

@ARTICLE{Cox1,
  author={Choi, Chang-Sik},
  journal={IEEE Transactions on Wireless Communications}, 
  title={Modeling and Analysis of Downlink Communications in a Heterogeneous {LEO} Satellite Network}, 
  year={2024},
  volume={23},
  number={8},
  pages={8588-8602},
  doi={10.1109/TWC.2024.3351876}}

@ARTICLE{Cox2,
  author={Choi, Chang-Sik},
  journal={IEEE Journal on Selected Areas in Communications}, 
  title={Analysis of a Delay-Tolerant Data Harvest Architecture Leveraging Low Earth Orbit Satellite Networks}, 
  year={2024},
  volume={42},
  number={5},
  pages={1329-1343},
  doi={10.1109/JSAC.2024.3365871}}

@INPROCEEDINGS{Badiaa_LEO2,
  author={Gabr, Badiaa and Kishk, Mustafa A.},
  booktitle={Proc. IEEE VTC2025-Spring)}, 
  title={{Statistical Analysis for Average Peak Age of Information in {LEO} Satellite-Enabled IoT}}, 
  year={},
  volume={},
  number={},
  pages={1-6},
  doi={10.1109/VTC2025-Spring65109.2025.11174394}}

@INPROCEEDINGS{Meta_3,
  author={Naganjani, Baly and Dwivedi, Ayush Kumar and Chaudhari, Sachin and Riihonen, Taneli},
  booktitle={Proc. IEEE  Globecom Workshops (GC Wkshps)}, 
  title={Analyzing {6G} Satellite {IoT} Architecture Using Stochastic Geometry: A Meta Distribution Approach}, 
  year={2024},
  volume={},
  number={},
  pages={1-6},
  doi={10.1109/GCWkshp64532.2024.11101096}}

@ARTICLE{ElSawy1,
  author={ElSawy, Hesham and Alouini, Mohamed-Slim},
  journal={IEEE Communications Letters}, 
  title={{On the Meta Distribution of Coverage Probability in Uplink Cellular Networks}}, 
  year={2017},
  volume={21},
  number={7},
  pages={1625-1628},
  doi={10.1109/LCOMM.2017.2692205}}

@ARTICLE{Martin1,
  author={Haenggi, Martin},
  journal={IEEE Transactions on Wireless Communications}, 
  title={{The Meta Distribution of the SIR in Poisson Bipolar and Cellular Networks}}, 
  year={2016},
  volume={15},
  number={4},
  pages={2577-2589},
  doi={10.1109/TWC.2015.2504983}}

@ARTICLE{Meta_4,
  author={Lin, Hongxin and Zhang, Cheng and Huang, Yongming and Zhao, Rui and Yang, Luxi},
  journal={IEEE Wireless Communications Letters}, 
  title={{Fine-Grained Analysis on Downlink LEO Satellite-Terrestrial mmWave Relay Networks}}, 
  year={2021},
  volume={10},
  number={9},
  pages={1871-1875},
  doi={10.1109/LWC.2021.3084626}}

@misc{semantics,
      title={{Semantics-Aware Unified Terrestrial Non-Terrestrial 6G Networks}}, 
      author={Erfan Delfani and Agapi Mesodiakaki and Leandros Tassiulas and Nikolaos Pappas},
      year={2026},
      eprint={2505.01796},
      archivePrefix={arXiv},
      primaryClass={cs.NI},
      url={https://arxiv.org/abs/2505.01796}, 
}

@ARTICLE{Akram4,
  author={Al Homssi, Bassel and Al-Hourani, Akram and Wang, Ke and Conder, Phillip and Kandeepan, Sithamparanathan and Choi, Jinho and Allen, Ben and Moores, Ben},
  journal={IEEE Communications Magazine}, 
  title={{Next Generation Mega Satellite Networks for Access Equality: Opportunities, Challenges, and Performance}}, 
  year={2022},
  volume={60},
  number={4},
  pages={18-24},
  doi={10.1109/MCOM.001.2100802}}

@ARTICLE{El_Hajjar_1,
  author={Li, Qingchao and El-Hajjar, Mohammed and Cao, Kaijun and Xu, Chao and Haas, Harald and Hanzo, Lajos},
  journal={IEEE Transactions on Wireless Communications}, 
  title={{Holographic Metasurface-Based Beamforming for Multi-Altitude LEO Satellite Networks}}, 
  year={2025},
  volume={24},
  number={4},
  pages={3103-3116},
  doi={10.1109/TWC.2025.3527962}}

@ARTICLE{El_Hajjar_2,
  author={Li, Qingchao and El-Hajjar, Mohammed and Xu, Chao and An, Jiancheng and Yuen, Chau and Hanzo, Lajos},
  journal={IEEE Transactions on Communications}, 
  title={{Stacked Intelligent Metasurface-Based Transceiver Design for Near-Field Wideband Systems}}, 
  year={2025},
  volume={73},
  number={9},
  pages={8125-8139},
  doi={10.1109/TCOMM.2025.3544929}}

@ARTICLE{El_Hajjar_3,
  author={Li, Qingchao and El-Hajjar, Mohammed and Sun, Yanshi and Hanzo, Lajos},
  journal={IEEE Transactions on Wireless Communications}, 
  title={{Performance Analysis of Reconfigurable Holographic Surfaces in the Near-Field Scenario of Cell-Free Networks Under Hardware Impairments}}, 
  year={2024},
  volume={23},
  number={9},
  pages={11972-11984},
  doi={10.1109/TWC.2024.3386850}}

@ARTICLE{AoI23,
  author={Yang, Howard H. and Xu, Chao and Wang, Xijun and Feng, Daquan and Quek, Tony Q. S.},
  journal={IEEE Transactions on Wireless Communications}, 
  title={{Understanding Age of Information in Large-Scale Wireless Networks}}, 
  year={2021},
  volume={20},
  number={5},
  pages={3196-3210},
  doi={10.1109/TWC.2020.3048008}}

@ARTICLE{Satellite7,
  author={Jung, Dong-Hyun and Ryu, Joon-Gyu and Byun, Woo-Jin and Choi, Junil},
  journal={IEEE Transactions on Communications}, 
  title={{Performance Analysis of Satellite Communication System Under the Shadowed-Rician Fading: A Stochastic Geometry Approach}}, 
  year={2022},
  volume={70},
  number={4},
  pages={2707-2721},
  doi={10.1109/TCOMM.2022.3142290}}

\end{document}